\newcommand{\congc}[1]{{\color{red}(Cong: #1)}}
\newcommand{\congc}[1]{}
\newcommand{\bp}{\begin{proof} \small }
\newcommand{\ep}{\end{proof} \normalsize}
\newcommand{\epx}{\end{proof} \small}
\newcommand{\bpa}{\begin{proofappx} \footnotesize }
\newcommand{\epa}{\end{proofappx} \small }
\newtheorem{theorem}{Theorem}
\newtheorem{lemma}{Lemma}
\newtheorem{definition}{Definition}
\newtheorem*{theorem*}{Theorem}
\newtheorem*{proposition*}{Proposition}
\newtheorem*{corollary*}{Corollary}
\newtheorem*{lemma*}{Lemma}
\newtheorem*{assumption*}{Assumption}
\newtheorem*{definition*}{Definition}
\newtheorem*{claim*}{Claim}
\newcommand{\bm}[1]{\mbox{\boldmath $#1$}}
\newcommand{\be}{\begin{equation}}
\newcommand{\ee}{\end{equation}}
\newcommand{\bs}{\begin{subequations}}
\newcommand{\es}{\end{subequations}}
\newcommand{\bq}{\begin{eqnarray}}
\newcommand{\eq}{\end{eqnarray}}
\newcommand{\bqn}{\begin{eqnarray*}}
\newcommand{\eqn}{\end{eqnarray*}}
\newcommand{\ba}{\left[ \begin{array}}
\newcommand{\ea}{\\ \end{array} \right]}
\newcommand{\ben}{\begin{enumerate}}
\newcommand{\een}{\end{enumerate}}
\def\B{{\boldsymbol{B}}}
\def\e{{\boldsymbol{e}}}
\def\g{{\boldsymbol{g}}}
\def\p{{\boldsymbol{p}}}
\def\real{{\mathchoice%
{\hbox{\rm\setbox1=\hbox{I}\copy1\kern-.45\wd1 R}}
{\hbox{\rm\setbox1=\hbox{I}\copy1\kern-.45\wd1 R}}
{\hbox{\scriptsize\rm\setbox1=\hbox{I}\copy1\kern-.45\wd1 R}}
{\hbox{\scriptsize\rm\setbox1=\hbox{I}\copy1\kern-.45\wd1 R}}}}
\def\Zint{{\mathchoice{\setbox1=\hbox{\sf Z}\copy1\kern-.75\wd1\box1}
{\setbox1=\hbox{\sf Z}\copy1\kern-.75\wd1\box1}
{\setbox1=\hbox{\scriptsize\sf Z}\copy1\kern-.75\wd1\box1}
{\setbox1=\hbox{\scriptsize\sf Z}\copy1\kern-.75\wd1\box1}}}
\newcommand{\complex}{ \hbox{\rm C\kern-0.45em\rule[.07em]{.02em}{.58em}%
\kern 0.43em}}
\begin{document}
%
% paper title
% can use linebreaks \\ within to get better formatting as desired
\title{Online Geographical Load Balancing for Mobile Edge Computing with Energy Harvesting}

\author{Jie~Xu, Hang~Wu, Lixing~Chen, Cong~Shen
\thanks{J. Xu and L. Chen are with the Department of Electrical and
	Computer Engineering, University of Miami, USA. Email: jiexu@miami.edu, lx.chen@miami.edu.}
\thanks{H. Wu and C. Shen are with the Department of Electric Engineering and Information Science, University of Science and Technology of China. Email: wh1202@mail.ustc.edu.cn, congshen@ustc.edu.cn}
}

%\author{\IEEEauthorblockN{Jie Xu$^*$, Hang Wu$^\dagger$, Lixing Chen$^*$, Cong Shen$^\dagger$\\}
%\IEEEauthorblockA{$^*$Department of Electrical and Computer Engineering, University of Miami, USA\\
%$^\dagger$Department of Electric Engineering and Information Science, University of Science and Technology of China}}

% make the title area
\maketitle

\begin{abstract}
Mobile Edge Computing (MEC) (a.k.a. fog computing) has recently emerged to enable low-latency and location-aware data processing at the edge of mobile networks. Since providing grid power supply in support of MEC can be costly and even infeasible in some scenarios, on-site renewable energy is mandated as a major or even sole power supply. Nonetheless, the high intermittency and unpredictability of energy harvesting creates many new challenges of performing effective MEC. In this paper, we develop an algorithm called GLOBE that performs joint geographical load balancing (GLB) and admission control for optimizing the system performance of a network of MEC-enabled and energy harvesting-powered base stations. By leveraging and extending the Lyapunov optimization with perturbation technique, GLOBE operates online without requiring future system information and addresses significant challenges caused by battery state dynamics and energy causality constraints. Moreover, GLOBE works in a distributed manner, which makes our algorithm scalable to large networks. We prove that GLOBE achieves a close-to-optimal system performance compared to the offline algorithm that knows full future information, and present a critical tradeoff between battery capacity and system performance. Simulation results validate our analysis and demonstrate the superior performance of GLOBE compared to benchmark algorithms. \end{abstract}
% IEEEtran.cls defaults to using nonbold math in the Abstract.
% This preserves the distinction between vectors and scalars. However,
% if the conference you are submitting to favors bold math in the abstract,
% then you can use LaTeX's standard command \boldmath at the very start
% of the abstract to achieve this. Many IEEE journals/conferences frown on
% math in the abstract anyway.

% no keywords

% For peer review papers, you can put extra information on the cover
% page as needed:
% \ifCLASSOPTIONpeerreview
% \begin{center} \bfseries EDICS Category: 3-BBND \end{center}
% \fi
%
% For peerreview papers, this IEEEtran command inserts a page break and
% creates the second title. It will be ignored for other modes.
\IEEEpeerreviewmaketitle
\section{Introduction}

Mobile computing and the Internet of Things are driving the development of many new
applications, turning data and information into actions that create new capabilities, richer experiences and unprecedented economic opportunities. Although cloud computing enables convenient access to a centralized pool of configurable and powerful computing resources, it often cannot meet the stringent requirements of latency-sensitive or geographically constrained applications, such as mobile gaming, augmented reality, tactile Internet and connected cars, due to the often unpredictable network latency and expensive bandwidth \cite{shi2016edge,mao2017mobile}. As a remedy to these limitations, Mobile Edge Computing (MEC) \cite{patel2014mobile} (a.k.a. fog computing \cite{chiang2016fog}) has recently emerged as a new computing paradigm to enable in-situ data processing at the Internet edge, in close proximity to mobile devices, sensors, actuators and connected things. In MEC, network edge devices, such as base stations (BSs) \cite{mao2017mobile}, are endowed with cloud-like functionalities to serve users' requests as a substitute of clouds, while significantly reducing the transmission latency as they are just one wireless hop away from end users and data sources.

In increasingly many scenarios, BSs are primarily powered by renewable green energy (e.g. solar and/or wind), rather than the conventional electric grid, due to various reasons such as location, reliability, carbon footprint and cost. The high intermittency and unpredictability of energy harvesting (EH) \cite{ulukus2015energy} significantly exacerbate the challenge of the latency requirements of applications as the computing capacity of an individual MEC-enabled BS is significantly limited at any moment of time. Geographical load balancing (GLB) \cite{lin2012online,islam2015water} is a promising technique for optimizing MEC performance by exploiting the spatial diversity of the available renewable energy to re-shape the computation workload distribution among the distributed BSs. However, energy harvesting leads to extraordinary challenges that existing GLB approaches cannot address: not only the available energy in the batteries imposes a stringent energy constraint at any time moment, but also the intrinsic evolution of these constraints couple the GLB decisions across time, and yet the decisions have to be made without foreseeing the future. Compared to existing GLB approaches for data center networks that solve time-decoupled problems, GLB for EH-powered MEC networks demands a fundamentally new design that can optimally manage limited energy, computing and radio access resources in both spatial and temporal domains.

\begin{figure}
  \centering
  % Requires \usepackage{graphicx}
  \includegraphics[width=0.6 \linewidth]{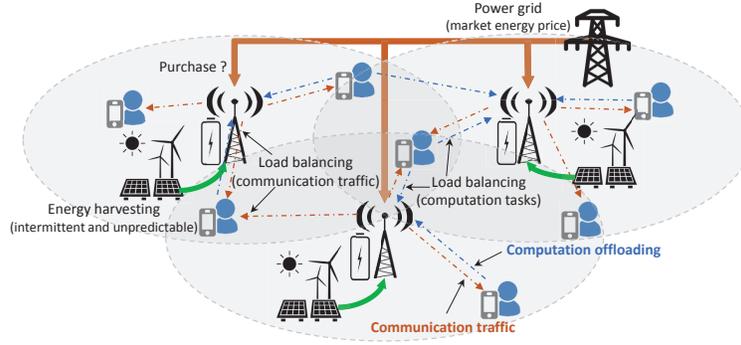}\\
  \caption{Illustration of load balancing in the MEC-enabled BS network with energy harvesting.}\label{sys_illu}
  \vspace{-20pt}
\end{figure}

In this paper, we study the joint problem of geographical load balancing, admission control and energy purchase among a network of EH-powered BSs (See Figure \ref{sys_illu} for an illustration) aimed at improving the mobile edge computing performance. The main contributions of this paper are as follows.

(1) We develop a novel framework, called GLOBE (Geographical LOad Balancing with Energy harvesting), for minimizing the long-term system cost (due to violating the computation delay constraint and dropping data traffic). In addition to the spatial coupling originated from load balancing among geographically distributed BSs, the considered problem also exhibits strong temporal coupling due to the energy harvesting causality constraint. Such \emph{spatial-temporal coupling} makes the investigated problem significantly different and more challenging than conventional GLB problems studied in the literature.

(2) We develop an online and distributed algorithm leveraging the Lyapunov stochastic optimization framework to solve the spatial-temporal GLB optimization problem. Instead of the conventional Lyapunov virtual queue technique, our algorithm is based on the Lyapunov perturbed queue technique to handle the energy causality constraint. The algorithm enables BSs to make GLB decisions without foreseeing the future system dynamics yet provides provable performance guarantee. In particular, we prove that GLOBE achieves within a bounded deviation from the optimal system performance that can be achieved by an oracle algorithm that knows the complete future information, provided that the battery capacity of the BSs is sufficiently large. The performance-battery capacity tradeoff is theoretically characterized.

(3) The perturbed battery queue technique decouples the GLB problem across time. In each time slot, we also develop a distributed algorithm to solve the per-time slot GLB problem. Lagrangian dual decomposition with quadratic regularization is used to relax the computation capacity constraint of individual BSs and decouple the spatially-coupled problem into subproblems that can be solved by individual BSs.

(4) We run extensive simulations to evaluate the performance of GLOBE and verify our analytical results for various system configurations and traffic arrival patterns. The results confirm that our method significantly improves the system performance compared to benchmark algorithms that do not perform GLB or only optimize the system myopically.

The rest of this paper is organized as follows. Section II discusses related works. Section III presents the system model and formulates the problem. Section IV develops the GLOBE framework and distributed and online solutions. Section V proves the performance guarantee of GLOBE. Section VI shows the simulation results. Finally, Section VII concludes the paper.

\section{Related Work}
Mobile Edge Computing \cite{mao2017mobile} (a.k.a fog computing computing \cite{chiang2016fog}) has emerged as a new paradigm to provide service with ultra-low latency and precise location awareness by bringing the computation resource closer to the end users. However, compared to the conventional mobile cloud computing, the edge servers are still limited in the computational and storage resource \cite{mach2017mobile} and hence may fail to offer the expected quality of service (QoS) when facing excessive workload. To address this challenge, hierarchical edge computing architectures \cite{tong2016hierarchical,xu2017online} are investigated to allow overloaded edge servers to further offloading computation tasks to the cloud server or the higher-tier servers. While this structure effectively solves the potential overloading problem at the edge servers, the computation tasks further offloaded to the cloud server will suffer from the large congestion delay due to backhaul transmission and hence degrades the QoS.

%The concept of offloading data and computation in cloud computing is used to address the inherent problems in mobile computing by using resource providers other than the mobile devices itself to host the execution of mobile applications \cite{fernando2013mobile}. In the classical scenario, an application is run on a resource rich cloud server located in the remote mega-scale data center, while the mobile device acts like a thin client connecting over to the remote server through 4G/Internet \cite{satyanarayanan2010mobile}. Recently, mobile edge computing \cite{shi2016edge,mao2017mobile} (a.k.a. fog computing \cite{chiang2016fog}) brings computing resources closer to the end users to enable ultra-low latency and precise location-awareness, thereby supporting a variety of emerging mobile applications such as mobile gaming, augmented reality, tactile Internet and autonomous vehicles. Nevertheless, edge servers co-located with BSs cannot offer the same computation and storage capacities as traditional cloud servers, thereby demanding collaboration among BSs.

Many recent works \cite{chen2016socially, queis2015fogbalancing, xiao2017qoe} investigate cooperation among edge entities to improve uses' experience by fully utilizing the resource within the edge system and reducing the dependency on the cloud server. Although the cooperation among BSs in the conventional wireless communication has been well investigated subject to various constraints, e.g., radio resources \cite{abdelnasser2014clustering}, energy consumption budgets \cite{rubio2014association} and backhaul bandwidth capacity \cite{tam2017joint}), the cooperation among MEC-enabled BSs is a very different topic since the radio resource and computational resource need to be considered jointly. Efforts have been made to coordinate the edge entities to serve the mobile users collaboratively. In \cite{queis2015fogbalancing,queis2015small}, computation load distribution among the network of BSs is investigated by considering both radio and computational resource constraints, where clustering algorithms are proposed to maximize users' satisfaction ratio while keeping the communication power consumption low. In \cite{chen2016socially,tanzil2016distributed,Guruacharya2013Dynamic}, the coalitional game theory is applied to enable distributed formation of femto-clouds. However, these works study time-decoupled problems and do not consider energy-harvesting and hence, are very different from our paper. In \cite{gong2016networked}, cooperation among two BSs powered by renewable energy is investigated. However, it considers only two BSs and does not study MEC. Our prior work \cite{xu2017online} is one of the first works that consider energy harvesting in the context of MEC. However, it studies the hierarchical offloading problem for a single MEC-enable BS with energy harvesting capability but neglects the cooperation among multiple BSs. In this paper, we complete the story by exploring the cooperation among MEC-enabled BSs through geographical load balancing of communication traffic and computation workload.

%Many recent works investigate BS cooperation for improving the system performance, subject to various constraints such as local resource avalability (e.g. radio resources \cite{abdelnasser2014clustering,Guruacharya2013Dynamic}, energy consumption budgets \cite{rubio2014association} and backhaul bandwidth capacity \cite{tam2017joint}). However, most of these works focus on optimizing the radio access performance only without considering the computing capability of BSs. In \cite{queis2015fogbalancing,queis2015small}, computation load distribution among the network of BSs is investigated by considering both radio and computational resource constraints. Clustering algorithms are proposed to maximize users' satisfaction ratio while keeping the communication power consumption lower. In \cite{tanzil2016distributed}, coalitional game theory is applied to enable distributed formation of femto-clouds. These works study time-decoupled problems and do not consider energy-harvesting and hence, are very different from our paper. Our prior work \cite{xu2017online,7973020} is one of the first works that consider energy harvesting in the context of MEC. However, it studies the offloading problem for a single MEC-enable BS with energy harvesting capability but not the loading balancing among multiple BSs.

Geographical Load Balancing (GLB) has been extensively studied in data center network (DCN) research \cite{lin2012online, xu2015temperature, lou2015spatio, liu2015greening, islam2015water}. Most of these works study load balancing problems that are independent across time and hence a myopic optimization problem is often formulated and solved to derive the GLB policy. Only a few works consider temporally coupled GLB problems. For example, authors in \cite{lin2012online} consider the temporal dependency with respect to the switching costs (turning on/off) of data center servers, which significantly differs from our considered problem where the temporal dependency is caused by the energy harvesting and energy consumption of BSs. Hence, different techniques are required to solve our problem. A similar temporally coupled GLB problem was studied under the framework of Lyapunov optimization in \cite{islam2015water}, where the authors consider a long-term water consumption constraint while optimizing the load balancing. However, firstly, the constraint in \cite{islam2015water} is a long-term average constraint whereas our paper considers a much more complicated and stringent battery causality constraint, and secondly, the constraint in \cite{islam2015water} is imposed on the entire network whereas in our paper each individual BS is constrained by its own (time-varying) battery state. To address these new challenges, we leverage the \emph{Lyapunov optimization with perturbation} technique \cite{neely2010dynamic} to develop our algorithm with a provable performance guarantee.

Recent works \cite{lin2012online,liu2015greening} studied the GLB with renewables, which show that GLB creates an important opportunity by allowing for ``follow the renewables'' routing. However, renewables in these works are considered as an instantaneous supplement to grid power, which needs to be matched by the energy demand instantaneously. By contrast, our considered problem uses renewables as the major power source and can be stored in the battery co-located with the BSs. Therefore, the battery state dynamics and the energy causality constraint need to be carefully considered while performing GLB. More importantly, the GLB algorithms (e.g. Averaging Fixed Horizon Control \cite{lin2012online}) in these works requires future information (albeit few), whereas our algorithm is able to work without foreseeing the future.

\section{System Model}
\subsection{Network Model}
We consider a network of $N$ base stations (BS), indexed by $\mathcal{N} = \{1,2,...,N\}$, providing communication and edge computing services to users. Each BS is mainly powered by renewable energy harvested from wind and/or solar radiation, and is equipped with a battery for energy storage. Renewable energy is considered as free. However, BSs can also purchase energy from the electric grid whenever needed, at a cost depending on the current energy market price and their energy demand. We consider a dense deployment scenario where BSs have overlapping coverage areas so that geographical load balancing is possible.

Time is discretized, with each time slot matching the timescale at which load balancing decisions can be updated. In each time slot $t$, each BS $i$ has a set of associated users $\mathcal{U}_i^t$ according to a certain user-cell association rule (e.g., maximizing the received signal strength). The BS can collect information (e.g. demand, channel conditions etc.) of its associated users and dispatch their loads to itself or nearby BSs thanks to the dense deployment of BSs. Having the associated BSs to make load balancing decisions instead of the user themselves can significantly reduce the algorithm complexity and information exchange overhead, thereby improving the scalability of the system. Let $\mathcal{U}^t = \cup_{i\in\mathcal{N}} \mathcal{U}_{i}^t$ be the set of all users in time $t$. For each user $u$, let $\mathcal{N}_u \subseteq \mathcal{N}$ be the set of BSs that are in its transmission range.

Although our framework and algorithm can handle a time-varying user set across time slots, for the ease of exposition, we will assume that the user set is static over the considered time horizon. In an alternative network model, we can also imagine that each user represents a service area which remains static although users inside it can change. Therefore, in the remainder of this paper, we drop the time index $t$ in the user set $\mathcal{U}^t$ and $\mathcal{U}^t_i, \forall i\in \mathcal{N}$.

\subsection{Communication Traffic and Cost Model}
Users have uplink and downlink communication traffic that has to be served by the BSs. We assume that uplink and downlink transmissions operate on orthogonal channels and focus on the downlink traffic since energy consumption associated with downlink transmissions predominates the total BS energy consumption \cite{Auer2011}. Henceforth, communication traffic refers to downlink communication traffic. In each time slot $t$, each user $u$ has communication traffic at arrival rate $\mu^t_i \leq \mu^{max}$, which we assume to be upper bounded by $\mu^{max}$. The data size of communication traffic arrival is modeled as an exponential random variable with mean $\omega_u$ for user $u$. Using Shannon's theorem, the expected energy consumption for each transmission by BS $i$ to its associated user $u$ can be computed as
\begin{align}
p^t_{i,u} = \mathbb{E}\left[\frac{P_{tx, i}\omega_u}{W\log_2\left(1 + \frac{H^t_{i,u}P_{tx,i}}{\sigma^2}\right)}\right],
\end{align}
where $W$ is the downlink bandwidth, $H^t_{i,u}$ is a random variable representing the downlink channel gain between BS $i$ and user $u$ in time slot $t$, $P_{tx,i}$ is BS $i$'s transmitting power, $\sigma^2$ is the noise power, and the expectation is taken over the data size and the channel state.

For each user $u$, its communication traffic can be transmitted by any of the BSs in $\mathcal{N}_u$ that covers $u$, e.g., via Coordinated Multi Point (CoMP) with Coordinated Scheduling (CS) \cite{Lee2012}. Denote $\alpha^t_u = \{\alpha^t_{u,j}\}_{j\in\mathcal{N}_u}$ as user $u$'s transmission strategy where $\alpha^t_{u,j}$ represents the amount of traffic that is transmitted by BS $j$. Since a BS has multiple associated users, the transmission load balancing strategy of BS $i$ is the collection of individual users' transmission strategy, represented by $\bm\alpha^t_i = \{\bm\alpha^t_u\}_{u\in\mathcal{U}_i}$. Further, the transmission load balancing strategy of the whole network is collected in the notation $\bm\alpha^t = \{\bm\alpha^t_i\}_{i\in\mathcal{N}}$. Clearly, a feasible transmission load balancing strategy must satisfy
\begin{align}\label{demand2}
\sum_{j\in\mathcal{N}_u}\alpha^t_{u,j} \leq \mu^t_u, \forall u.
\end{align}
Note that the strict inequality in the above constraint means that some communication traffic of a user may not be fulfilled by any BS. This could happen when energy supply of the BSs becomes a concern and hence the BSs have to drop some communication traffic to save energy. An alternative interpretation of constraint \eqref{demand2} is that user requested files, especially video files, are transmitted at a lower resolution and hence only the incremental high-resolution layers of the video will be dropped.

Depending on the amount of transmitted traffic and the current channel conditions, the transmission energy consumption of BS $i$ in time slot $t$ is
\begin{align}
E_{tx,i}(\bm\alpha^t) = \sum_{u \in \mathcal{U}}p^t_{i,u}\alpha^t_{u,i}.
\end{align}
Dropping communication traffic incurs costs to the network operator due to, e.g., user dis-satisfaction or reduced Quality of Experience (QoE). The cost is user-specific depending on user priority. The total communication traffic dropping cost of BS $i$ is thus
\begin{align}
C_{tx, i}(\bm\alpha^t) = \sum_{u\in\mathcal{U}_i}c_{tx, u}(\mu^t_u - \sum_{j\in\mathcal{N}_u}\alpha^t_{u,j}),
\end{align}
where $c_{tx,u}$ is the unit dropping cost for user $u$ which converts the dropped communication traffic into a monetary value.

\subsection{Computation Tasks and Cost Model}
In each time slot $t$, each user also generates computationally intensive tasks that have to be offloaded to the BS/edge server for processing. When the computation is finished, the computation results will be returned to the user. The computation task arrival of user $u$ is denoted by $\lambda^t_u \leq \lambda^{max}$, where $\lambda^{max}$ is the maximum arrival rate. For each computation task, the required number of CPU cycles is assumed to be an exponential random variable with mean $\rho$. The computation capability of BS $i$ is measured by its CPU speed (i.e. CPU cycles per second), denoted by $f_i$. Given the the constant CPU speed of BS $i$, and the exponential distribution of workload of each computation task, the processing time of a computation task follows an exponential distribution with mean $\rho/f_i$. Therefore, if BS $i$ processes computation tasks with arrival rate $\lambda$, the average computation delay (including the waiting time and the processing time) for a task, can be computed using a M/M/1 queuing model
\begin{align}
d_i^t = \frac{1}{f_i/\rho - \lambda}.
\end{align}
The workload is delay-sensitive which has to satisfy a maximum delay constraint $d^{max}$. If the total computation workload on a BS is too large, then the delay constraint will be easily violated. Therefore, computation load balancing is also performed to exploit the under-used, otherwise wasted, computational resources on BSs with light workload to improve the overall system performance.

In dense networks, computation load balancing is realized by offloading computation tasks to nearby BSs other than the directly associated BS. For each user $u$, let $\bm\beta^t_u =\{\beta^t_{u, j}\}_{j\in\mathcal{N}_u}$ be its offloading strategy. A computation load balancing strategy of BS $i$ is the collection of its users' offloading strategy $\bm\beta^t_i = \{\bm\beta^t_u\}_{u\in\mathcal{U}_i}$ and a computation load balancing strategy of the whole network is $\bm\beta^t = \{\bm\beta^t_i\}_{i\in\mathcal{N}}$. In a feasible computation load balancing strategy, the total offloaded workload of user $u$ must not exceed its computation demand, namely
\begin{align}\label{demand1}
\sum_{j\in\mathcal{N}_u} \beta^t_{u,j} \leq \lambda^t_u, \forall u.
\end{align}
Moreover, the received workload of BS $i$ must not exceed its computation capacity to satisfy the delay constraint, namely
\begin{align}\label{capacity}
\sum_{u \in \mathcal{U}} \beta^t_{u,i} \leq f_i/\rho - 1/d^{max}, \forall i.
\end{align}
Due to the above constraints, users can also choose to drop some computation workload (or process locally on user devices, or offload to the remote cloud) if the edge BS network collectively cannot support it.

The computation energy consumption of BS $i$ is proportional to its received workload (from its own users or users of nearby BSs), namely $\sum_{u \in \mathcal{U}} \beta^t_{u,i}$, and the square of its CPU speed $(f_i)^2$ \cite{mao2017mobile}. Therefore, the computation energy consumption of BS $i$ in time slot $t$ is
\begin{align}
E_{com,i}(\bm\beta^t) = \kappa(f_i)^2 \sum_{u \in \mathcal{U}} \beta^t_{u,i},
\end{align}
where the coefficient $\kappa$ is for unit task energy consumption. Dropping computation workload incurs a cost, which is linear to the dropped computation workload. The computation dropping cost for BS $i$ (considering its own associated users) is therefore
\begin{align}
C_{com, i}(\bm\beta^t) = \sum_{u\in\mathcal{U}_i}c_{com,u}(\lambda^t_u - \sum_{j\in\mathcal{N}_u}\beta^t_{u,j}),
\end{align}
where $c_{com,i}$ is the unit workload dropping cost for user $u$ which converts the dropped computation tasks into a monetary value.

\subsection{Energy Harvesting and Purchase}
BSs in the considered system are mainly powered by renewable energy harvested from the environment, such as wind energy and/or solar energy.  To capture the intermittent and unpredictable nature of the energy harvesting process, we model it as successive energy packet arrivals, i.e. in each time slot $t$, energy packets with amount $\mathcal{E}_i^t \leq \mathcal{E}^{max}$ arrive at BS $i$, where $\mathcal{E}_i^t$ is drawn from some unknown distribution upper bounded by $\mathcal{E}^{max}$. In each time slot $t$, part of the arrived energy, denoted by $e^t_i$, satisfying
\begin{align}\label{EH}
0 \leq e^t_i \leq \mathcal{E}_i^t, \forall i
\end{align}
will be harvested and stored in the battery of BS $i$. We start with assuming that the battery capacity is sufficiently large. Later we will show that by picking the value of $e^t_i$, the battery energy levels can be deterministically upper-bounded under the proposed algorithm, thus we only need finite-capacity batteries in the actual implementation. More importantly, including $e^t_i$'s as decision variables in the optimization facilitates the derivation and performance analysis of the proposed algorithm. Similar techniques were adopted in existing works \cite{lakshminarayana2014cooperation,mao2015lyapunov}. We collect the energy harvesting decisions of all BSs in the notation $\e^t = \{e^t_1,...,e^t_N\}$.

When renewable energy falls short, BSs can also purchase energy from the electric grid depending on its energy demand and the current market energy price, which varies over time. Let $c_{grid}^t$ be the unit energy price in time slot $t$. BS $i$'s energy purchase decision is denoted by $g^t_i \in [0, g^{max}]$ where $g^{max}$ is the maximum grid energy a BS can purchase. The energy purchase cost is thus
\begin{align}
C_{grid,i}(g^t_i) = c_{grid}^t g^t_i.
\end{align}

\subsection{Problem Formulation}
The objective of the network operator is to minimize the total system cost in \eqref{eq:total_sys_cost} due to dropping communication traffic/computation workload and purchasing grid energy by jointly optimizing load balancing, energy harvesting and purchase in each time slot.
\begin{align}\label{eq:total_sys_cost}
	C_i(\bm\alpha^t, \bm\beta^t, g^t_i) \triangleq C_{tx, i}(\bm\alpha^t) + C_{com, i}(\bm\beta^t) + C_{grid,i}(g^t_i)
\end{align}
 The optimization is subject to an energy availability constraint, namely the consumed energy in each time slot must not exceed what is available. The total energy consumption of BS $i$ in time slot $t$ includes communication and computation energy consumption, i.e.
\begin{align}
E_i^t(\bm\alpha^t, \bm{\beta}^t) = E_{tx, i}^t(\bm{\alpha}^t) + E_{com, i}^t(\bm{\beta}^t)
\end{align}
Let $B^t_i$ denote the available battery energy at the beginning of time slot $t$ for BS $i$. The energy causality constraint must be satisfied in every time slot
\begin{align}\label{causality}
E_i^t(\bm\alpha^t, \bm{\beta}^t) \leq B^t_i, \forall i \in \mathcal{N}
\end{align}
At the end of each time slot, the battery state evolves as follows
\begin{align}
B^{t+1}_i = \min\{B^t_i - E_i^t(\bm\alpha^t, \bm{\beta}^t) + e^t_i + g^t_i, B^{max}\} \label{dynamics}
\end{align}
where $B^{max}$ is the battery capacity.

The problem can thus be formulated as follows:
\begin{subequations}
	\begin{align}
	\textbf{P1}~~~\min_{{\small\bm{\alpha}^t, \bm{\beta}^t}, \e^t, \g^t, \forall t}~~&\lim_{T\to\infty}\frac{1}{T}\sum_{t=1}^T\sum_{i\in\mathcal{N}} \mathbb{E}\left[C_i(\bm\alpha^t, \bm\beta^t, g^t_i)\right]\\
	\text{subject to}~~~ %&\sum_{j\in\mathcal{N}_u}\alpha^t_{u,j} \leq \mu^t_u, \forall u \label{demand2}\\
%	&\sum_{j\in\mathcal{N}_u} \beta^t_{u,j} \leq \lambda^t_u, \forall u \label{demand1}\\
%	&\sum_{u \in \mathcal{U}} \beta^t_{u,i} \leq f_i/\rho - 1/d^{max}, \forall i \label{capacity}\\
%	& 0 \leq e^t_i \leq \mathcal{E}_i^t, \forall i \label{EH}\\
%	& E_i^t(\bm\alpha^t, \bm{\beta}^t) \leq B^t_i, \forall i \in \mathcal{N} \label{causality}
	&\text{Constraints \eqref{demand2},  \eqref{demand1}, \eqref{capacity}, \eqref{EH}, \eqref{causality}}, \forall t \nonumber
	\end{align}
\end{subequations}
where \eqref{demand2}, \eqref{demand1} are the constraints for feasible transmission load balancing strategy and feasible computation load balancing strategy, respectively. \eqref{capacity} is the computation capacity constraint for each BS. \eqref{EH} indicates that the harvested energy cannot exceed the green energy arrivals. \eqref{causality} is the battery causality constraint, indicating that the energy consumption in the current time slot should not exceed the battery level.

Because of the battery state dynamics and energy causality constraints, the load balancing decisions are highly coupled across time slots. Let $C^*_1$ be the infimum time average system cost achievable by any policy that meets the required constraints in every time slot, possibly by an oracle algorithm that has complete future information of the communication traffic arrival process, the computation task arrival process, the energy harvesting process, the market energy price, and the channel conditions. We note that $C^*_1$ represents a performance upper bound for practical algorithms as they do not possess complete and accurate future information. In the next sections, we will develop an algorithm that achieves $C^*_1$ within a bounded deviation without requiring future information.

\section{Online Load Balancing for EH-Powered MEC}
\subsection{An Online Algorithm based on Perturbed Lyapunov Optimization}
Lyapunov optimization is a powerful framework that enables online stochastic optimization without requiring future system dynamics yet provides provable performance guarantee. However, conventional Lyapunov optimization \cite{neely2010stochastic} is not directly applicable for solving $\textbf{P1}$ due to the presence of energy causality constraints \eqref{causality}. In order to circumvent this obstacle, we take an alternative approach based on the perturbed Lyapunov technique similar to \cite{urgaonkar2011optimal}. First, we formulate a slightly modified version of $\textbf{P1}$  as follows:
\begin{subequations}
	\begin{align}
	\textbf{P2}~~~\min_{\bm{\alpha}^t, \bm{\beta}^t, \e^t, \g^t, \forall t} &~~~ \lim_{T\to\infty}\frac{1}{T}\sum_{t=1}^T\sum_{i\in\mathcal{N}} \mathbb{E}\left[C\left(\bm\alpha^t, \bm\beta^t, g^t_i\right)\right] \\
	\text{subject to} & ~~~ \lim_{T\to\infty}\frac{1}{T}\sum_{t=1}^T\sum_{i\in\mathcal{N}}\mathbb{E} \left[E_i^t(\bm\alpha^t, \bm{\beta}^t) - g^t_i - e^t_i\right]=0\label{relaxed_causality} \\
	&~~~\text{Constraints \eqref{demand2}, \eqref{demand1}, \eqref{capacity},  \eqref{EH}}, \forall t \nonumber
	\end{align}
\end{subequations}
where the energy causality constraint \eqref{causality} in \textbf{P1} is replaced with a long-term energy demand and supply clearance constraint \eqref{relaxed_causality}. It can be easily shown that \textbf{P2} is a relaxed version of \textbf{P1}, as any feasible solution to \textbf{P1} would also satisfy the constraints in \textbf{P2}. To see this, consider any policy that satisfies \eqref{EH} and \eqref{causality}, then summing  equation \eqref{dynamics} over $t \in \{1,\dots,T\}$, dividing by $T$ and taking limits as $T\rightarrow\infty$  yields \eqref{relaxed_causality}. Let $C^*_2$ denote the optimal value of \textbf{P2} and then, we must have $C^*_2\leq C^*_1$. Following the framework in \cite{neely2010stochastic}, it can be shown that the optimal solution to the relaxed problem \textbf{P2} can be obtained by the method of stationary randomized policy, stated in the following lemma.
\begin{lemma}
There exists a stationary and possibly randomized policy $\Pi$ that achieves
\begin{align}
\sum_{i\in\mathcal{N}} \mathbb{E}\left[C_i(\bm\alpha^{\Pi,t}, \bm\beta^{\Pi,t}, g^{\Pi,t}_i)\right]=C^*_2 \label{lemmaC}
\end{align}
while satisfies the constraints \eqref{demand1}, \eqref{capacity}, \eqref{demand2}, \eqref{EH} in \textbf{P2} and
\begin{align}
\mathbb{E}\left[E_i^t(\bm\alpha^{\Pi,t}, \bm{\beta}^{\Pi,t}) - g^{\Pi,t}_i - e^{\Pi,t}_i\right]=0 \label{lemmaE}
\end{align}
\end{lemma}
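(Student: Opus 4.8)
The plan is to establish the lemma through the standard achievability argument for i.i.d. stochastic network optimization, following \cite{neely2010stochastic}. First I would collapse all of the exogenous randomness realized in slot $t$ — the arrivals $\{\mu^t_u,\lambda^t_u\}_u$, the channel gains $\{H^t_{i,u}\}$ (hence the per-unit transmission energies $p^t_{i,u}$), the harvested amounts $\{\mathcal{E}^t_i\}_i$, and the grid price $c^t_{grid}$ — into a single observed state $\omega^t$, which I assume to be i.i.d. across slots (or, more generally, stationary and ergodic) with distribution $\pi$. For each realization $\omega$ let $\mathcal{A}(\omega)$ be the set of one-shot decisions $(\bm\alpha,\bm\beta,\e,\g)$ obeying the instantaneous constraints \eqref{demand2}, \eqref{demand1}, \eqref{capacity}, \eqref{EH}, and $\g\in[0,g^{max}]^N$. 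A \emph{stationary randomized policy} $\Pi$ is a measurable conditional law that, seeing only $\omega^t$, draws an action from $\mathcal{A}(\omega^t)$; since $\omega^t$ is i.i.d. and $\Pi$ ignores history, every per-slot expectation it induces is constant in $t$.

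Next I would introduce the performance region of such policies, using the per-BS energy imbalance (which is what summing the battery dynamics \eqref{dynamics} over $t$ yields, separately for each $i$) as the constraint vector:
\begin{align}
\Lambda \triangleq \Big\{ (\bar C,\bar D_1,\dots,\bar D_N) : \bar C = \mathbb{E}\big[\textstyle\sum_{i\in\mathcal{N}} C_i\big],\ \bar D_i = \mathbb{E}\big[E_i^t - g^t_i - e^t_i\big]\ \forall i,\ \text{for some stationary } \Pi \Big\}.
\end{align}
The key ingredient is the achievable-region theorem of \cite{neely2010stochastic}: because the bounds $\mu^{max},\lambda^{max},\mathcal{E}^{max},g^{max}$ and the capacity constraint \eqref{capacity} make each $\mathcal{A}(\omega)$ compact and the cost and imbalance maps continuous, $\Lambda$ is convex (randomization allows mixing) and compact, and — via a Carath\'eodory/measurable-selection argument — contains the limiting time-average vector produced by \emph{every} feasible policy for \textbf{P2}, including the clairvoyant one attaining the infimum $C^*_2$. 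Since that infimum is attained at imbalance zero, the target point $(C^*_2,0,\dots,0)$ lies in $\Lambda$, i.e. it is realized by some stationary randomized policy $\Pi$.

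Finally I would read off the lemma from stationarity: because $\Pi$ is stationary and $\omega^t$ is i.i.d., $\mathbb{E}[\sum_i C_i(\bm\alpha^{\Pi,t},\bm\beta^{\Pi,t},g^{\Pi,t}_i)]$ does not depend on $t$ and equals $C^*_2$, which is \eqref{lemmaC}; likewise each long-term balance $\bar D_i=0$ collapses to the per-slot identity $\mathbb{E}[E_i^t(\bm\alpha^{\Pi,t},\bm\beta^{\Pi,t}) - g^{\Pi,t}_i - e^{\Pi,t}_i]=0$, which is \eqref{lemmaE}. I expect the genuine obstacle to be the two halves of the achievable-region theorem rather than any slot-level computation: showing that no history-dependent or future-aware policy can produce a time-average outside $\Lambda$, and showing that the infimum $C^*_2$ is \emph{attained} (not merely approached) by a stationary policy. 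Both rest on verifying compactness of $\mathcal{A}(\omega)$, continuity and boundedness of $C_i$ and $E_i^t$ in the decision variables, and the measurable selection that turns a limit of mixtures into a single stationary law; the convexity of the per-slot feasible sets and the linearity of the dropping and grid costs make these checks routine but indispensable.
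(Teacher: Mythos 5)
Your proposal takes exactly the route the paper intends: the paper's own ``proof'' is a single sentence deferring to the stationary-randomized-policy framework of \cite{neely2010stochastic}, and your argument is a faithful expansion of precisely that framework (collapsing the exogenous randomness into an i.i.d.\ state $\omega^t$, forming the performance region, invoking the achievable-region/optimality-of-stationary-policies theorem, and reading off the per-slot identities from stationarity). Since the paper omits all details, your write-up is if anything more complete than the original, and the i.i.d./ergodicity hypothesis on the system state that you flag explicitly is indeed the implicit assumption the paper's citation relies on.
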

\begin{proof}
The proof follows the framework in \cite{neely2010stochastic} and is omitted here for brevity.
\end{proof}
However, the optimal policy $\Pi$ requires the knowledge of the system dynamics over the entire time horizon, and hence is very difficult to derive in practice if not impossible. Next, we develop an online algorithm, called GLOBE, to solve \textbf{P1} and compare its performance with $C^*_2$ achieved by $\Pi$. We first define the \textit{perturbed battery queue} for each BS.
\begin{definition}
	The perturbed battery queue $\tilde{B}^t_i$ of BS $i$ is defined as
	\begin{align}
	\tilde{B}^t_i=B^t_i-\theta, \forall i\in\mathcal{N}
	\end{align}
	where $\theta$ is the perturbation parameter.
\end{definition}
The value of $\theta$ will be specified later when we analyze the algorithm performance. In a nutshell, the proposed GLOBE algorithm aims to minimize the weighted sum of the system cost and the perturbed energy queue in each time slot, which shall stabilize $B^t_i$ around the perturbed energy level $\theta$ and meanwhile minimize the system cost, where the weight is adaptively updated over time. In each time slot $t$, the load balancing strategies, the amount of harvested and purchased energy are determined by solving the following optimization problem:
\begin{subequations}
	\begin{align}
	\textbf{P3}~~~\min_{\bm\beta^t, \bm\alpha^t, \e^t, \g^t} &~~~ \sum_{i\in\mathcal{N}} \left(V\cdot C_i(\bm\alpha^t, \bm\beta^t, g^t_i) - \tilde{B}^t_i \cdot (E_i^t(\bm\alpha^t, \bm{\beta}^t) - g^t_i - e^t_i) \right)\\
	\text{subject to}&~~~\text{Constraints \eqref{demand2}, \eqref{demand1}, \eqref{capacity}, \eqref{EH}}
	\end{align}
\end{subequations}
which is parameterized by only the current system state (i.e., communication traffic arrival, computation task arrival, energy price, channel conditions in time slot $t$, etc). Therefore, our algorithm can work online without requiring future information of the system dynamics. Fig. \ref{fig:globeillu} illustrates GLOBE for an arbitrary time slot $t$. At the beginning of each time slot, GLOBE observes the system states and battery level. Then, with the observed information, it derives the optimal energy harvesting, grid power purchase, and load balancing decisions. At the end of the time slot, the battery states of the BSs are updated based on the acquired energy (via harvesting and purchase) and the consumed energy (due to transmission and computation), thereby linking per-time slot problems across time.
\begin{figure}[htb]
	\centering
	\includegraphics[width=0.65\linewidth]{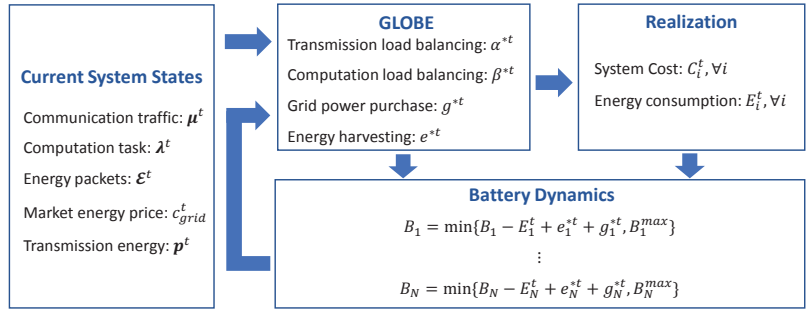}
	\caption{Illustration of GLOBE for time slot $t$.}
	\label{fig:globeillu}
\vspace{-20pt}
\end{figure}

\begin{algorithm}[htb]
	\caption{The GLOBE algorithm}
	\begin{algorithmic}[1]
		\State \textbf{Input}: $\{\theta_i,  i \in \mathcal{N}\}$, $V$
		\State \textbf{Output}: Load balancing $\bm\alpha^t$ and $\bm\beta^t$, purchased energy $\g^t$, harvested energy $\e^t$, for every $t$
		\For{every time slot $t$}
		\State Observe $\bm\mu^t, \bm\lambda^t, \mathcal{E}^t, c^t_{grid}, \p^t$
		\State Solve (\textbf{P3}) to get $\bm\alpha^t$, $\bm\beta^t$, $\e^t$ and $\g^t$
		\State Update the battery state according to \eqref{dynamics} for $i \in \mathcal{N}$
		\EndFor
	\end{algorithmic}
\end{algorithm}

The GLOBE algorithm is formally presented in Algorithm 1. How this algorithm is derived will be explained and its performance will be theoretically analyzed in Section V. Now, to complete GLOBE, it remains to solve the per-time slot problem \textbf{P3}.

The objective function in \textbf{P3} can be decomposed into three parts:
\begin{equation}\label{p3_decom}
\begin{split}
&\sum_{i\in\mathcal{N}}\left(V\cdot C_i(\bm\alpha^t, \bm\beta^t, g^t_i) - \tilde{B}^t_i \cdot(E_i^t(\bm\alpha^t, \bm{\beta}^t) - g^t_i - e^t_i)\right)\\
=&\sum_{i\in\mathcal{N}}\left(V\cdot c^t_{grid}g^t_i + \tilde{B}^t_i (g^t_i + e^t_i) \right)\\
+ &\sum_{i\in\mathcal{N}}\left(V\cdot \sum_{u\in\mathcal{U}_i}c_{tx,u}(\mu^t_u - \sum_{j\in\mathcal{N}_u}\alpha^t_{u,j}) - \tilde{B}^t_i \sum_u p^t_{i,u}\alpha^t_{u,i}\right) \\
+&\sum_{i\in\mathcal{N}} \left( V\cdot \sum_{u\in \mathcal{U}_i}c_{com,u}(\lambda^t_u - \sum_{j\in\mathcal{N}_u}\beta^t_{u,j}) - \tilde{B}^t_i \kappa(f_i)^2 \sum_{u\in\mathcal{U}}\beta^t_{u,i}\right).
\end{split}
\end{equation}
Given the current  perturbed battery queues of all BSs, the first term on the right-hand side depends only on the transmission load balancing strategy $\bm\alpha^t$, the second term depends only on the computation offloading load balancing strategy $\bm\beta^t$, and the third term depends only on the energy harvesting $\e^t$ and purchase strategies $\g^t$. Therefore, we can optimize each term separately.

\vspace{-10pt}
\subsection{Optimal Energy Harvesting and Purchase}
We start with solving the per-slot optimal energy harvesting and purchase problem, which is the first term of \eqref{p3_decom}:
\begin{subequations}
	\begin{align}
	\min_{\e^t, \g^t}~~~\sum_{i\in\mathcal{N}} \left(V\cdot c^t_{grid}g^t_i + \tilde{B}^t_i (g^t_i + e^t_i) \right)~~~~
	\text{subject to}~~~0 \leq e^t_i \leq \mathcal{E}_i^t, \forall i.
	\end{align}
\end{subequations}
This is a simple linear programming (LP) and can be solved by each BS in a fully distributed way without requiring any message exchange among the BSs. The optimal solution can be derived as follows
\begin{align}
e^{t*}_i = \mathcal{E}^t_i \cdot \mathbf{1}\{\tilde{B}^t_i \leq 0\}, \forall i;~~~~
g^{t*}_i = g^{max} \cdot \mathbf{1}\{V\cdot c^t_{grid} + \tilde{B}^t_i \leq 0\}, \forall i
\end{align}
where $\mathbf{1}\{\cdot\}$ is the indicator function. The optimal solution suggests that
\begin{enumerate}
\item BS $i$ should harvest all possible energy if the battery queue $B^t_i$ is less than a threshold $\theta_i$ and zero energy otherwise;
\item BS $i$ should purchase the maximum amount of grid power if the battery queue is less than a threshold $\theta_i - V\cdot c^t_{grid}$ and zero grid energy otherwise.
\end{enumerate}

Since $\theta_i - V\cdot c^t_{grid} < \theta_i$, renewable energy has a high priority than the grid power, i.e.,  renewable energy will be acquired before purchasing any grid energy. This is intuitive because renewable energy is free. Moreover, because the grid energy price $c^t_{grid}$ varies over time, the energy purchase strategy also varies depending on $c^t_{grid}$. The BS is more likely to purchase grid energy when the price is low. The energy harvesting and purchase strategies are very simple and both have a ``none-or-all'' structure. However, we will show later that these simple strategies indeed ensure that we only need a finite battery capacity for each BS while providing provable performance guarantee.

\subsection{Optimal Transmission Load Balancing}
The optimal transmission load balancing can be obtained by solving the following LP problem:
\begin{subequations}
	\begin{align}
	\max_{\bm\alpha^t}&~~~\sum_{i\in\mathcal{N}} \left( \sum_{u\in\mathcal{U}_i}Vc_{tx,u}\sum_{j\in\mathcal{N}_u}\alpha^t_{u,j} + \tilde{B}^t_{i}\sum_{u} p^t_{i,u}\alpha^t_{u,i}\right)\label{TLB}\\
	\text{subject to}&~~~\sum_{j\in\mathcal{N}_u}\alpha^t_{u,j} \leq \mu^t_{u}, \forall i, \forall u\in\mathcal{U}_i
	\end{align}
\end{subequations}
The objective function can be further rearranged as follows
\begin{align*}
\sum_{i\in\mathcal{N}} \left( \sum_{u\in\mathcal{U}_i}Vc_{tx,u}\sum_{j\in\mathcal{N}_u}\alpha^t_{u,j} + \tilde{B}^t_{i}\sum_{u} p^t_{i,u}\alpha^t_{u,i}\right)
=\sum_{i\in\mathcal{N}} \sum_{u\in\mathcal{U}_i} \sum_{j\in\mathcal{N}_u}(Vc_{tx,u} + \tilde{B}^t_j p^t_{j,u})\alpha^t_{u,j}.
\end{align*}
This enables each BS $i$ to decide its transmission load balancing strategy for its own associated users in a distributed way, after exchanging information of the perturbed battery queue lengths with neighbor BSs. Specifically, BS $i$ solves the following problem for each of its user $u\in\mathcal{U}_i$:
\begin{subequations}
	\begin{align}
\max_{\bm\alpha^t_u}~~\sum_{j\in\mathcal{N}_u}(Vc_{tx,u}+\tilde{B}^t_jp^t_{j,u})\alpha^t_{u,j} ~~~~\text{subject to}~~\sum_{j\in\mathcal{N}_u}\alpha^t_{u,j} \leq \mu^t_{u}. \label{TLB-i}
    \end{align}
\end{subequations}

The optimal solution has a simple structure. If there exist some BS $j$ such that the coefficient $Vc_{tx,u} + \tilde{B}^t_j p^t_{j,u} \geq 0$, then all user $u$'s communication traffic will solely be transmitted by the BS that has the largest positive coefficient. If no BS has positive coefficients, then user $u$'s downlink traffic will not be transmitted, thereby causing communication traffic to be dropped. We note that whether traffic dropping occurs or not depends on the cost $c_{tx,u}$. When $c_{tx,u}$ is sufficiently large (i.e. dropping communication traffic is very costly), then there will always be BSs with positive coefficients and hence no traffic will be dropped.

\subsection{Optimal Computation Load Balancing}
The optimal computation load balancing strategy can be obtained by solving the following LP problem
\begin{subequations}
	\begin{align}
	\max_{\bm\beta^t} &~~~\sum_{i\in\mathcal{N}} \left( \sum_{u\in\mathcal{U}_i}Vc_{com,u}\sum_{j\in\mathcal{N}_u}\beta^t_{u,j} + \tilde{B}^t_{i}\sum_{u} \kappa(f_i)^2\beta^t_{u,i}\right) \label{CLB}\\
	\text{subject to}&~~~\sum_{j\in\mathcal{N}_u}\beta^t_{u,j} \leq \lambda^t_{u}, \forall i, \forall u\in\mathcal{U}_i\\
	&~~~\sum_{u\in\mathcal{U}}\beta^t_{u,i} \leq f_i/\rho - 1/d^{max}, \forall i  \label{decoupled constraints}
	\end{align}
\end{subequations}

Similar to the transmission load balancing problem, the objective function can also be rearranged as follows:
\begin{align*}
\sum_{i\in\mathcal{N}}\left(\sum_{u\in\mathcal{U}_i}Vc_{com,u}\sum_{j\in\mathcal{N}_u}\beta^t_{u,j} + \tilde{B}^t_i \kappa (f_i)^2\sum_u \alpha^t_{u,i} \right)
= \sum_{i\in\mathcal{N}}\sum_{u\in\mathcal{U}_i}\sum_{j\in\mathcal{N}_u}(Vc_{com,u} +\tilde{B}^t_j\kappa(f_j)^2)\beta^t_{u,j}.
\end{align*}
While the computation load balancing problem looks similar to the transmission load balancing problem in the previous subsection, it indeed is a much more complicated problem because of the computation capacity constraint \eqref{decoupled constraints} that makes the load balancing decisions among BSs highly coupled. To enable a distributed solution, we use Lagrangian dual decomposition to relax the computation capacity constraint and decouple the primal problem into several subproblems. Unfortunately, the above considered LP problem is sensitive to perturbations and does not work well with the dual decomposition technique. To overcome this difficulty, we add a quadratic regularization term to the objective function to smooth the original LP problem{\footnote{We note that a similar approach was proposed and adopted in \cite{yuan2013linearized}.}. The regularized problem is:
\begin{subequations}
	\begin{align}
	\max_{\bm\beta^t}&~~~\sum_{i\in\mathcal{N}}\sum_{u\in\mathcal{U}_i}\sum_{j\in\mathcal{N}_u}\left((Vc_{com,u} +\tilde{B}^t_j\kappa(f_j)^2)\beta^t_{u,j} - \frac{1}{2\epsilon}(\beta^t_{u,j})^2 \right) \label{CLB regularized}\\
	\text{subject to}&~~~\sum_{j\in\mathcal{N}_u}\beta^t_{u,j} \leq \lambda^t_{u}, \forall i, \forall u\in\mathcal{U}_i;~~~~\sum_{u\in\mathcal{U}}\beta^t_{u,i} \leq f_i/\rho - 1/d^{max}, \forall i
	\end{align}
\end{subequations}
According to \cite{friedlander2007exact}, there exists $\epsilon_{min} > 0$ such that the optimal solution to problem (\ref{CLB regularized}) is also an optimal solution to problem (\ref{CLB}) for any $\epsilon \geq \epsilon_{min}$. For our problem, we have  found that $\epsilon = 10^7$ works well. Therefore, we will instead solve the smoothed quadratic programming (QP) problem. We associate a Lagrangian multiplier $\gamma_i$ with the computation capacity constraint of each BS $i$. The Lagrangian of the primal problem \eqref{CLB regularized} is thus
\begin{align*}
L(\bm\gamma, \bm\beta^t) = &\sum_{i\in\mathcal{N}}\sum_{u\in\mathcal{U}_i}\sum_{j\in\mathcal{N}_u}(Vc_{com,u} +\tilde{B}^t_j\kappa(f_j)^2)\beta^t_{u,j} - \frac{1}{2\epsilon}(\beta^t_{u,j})^2 - \sum_{i\in\mathcal{N}}\gamma_i(\sum_{u\in\mathcal{U}}\beta^t_{u,i} - f_i/\rho + 1/d^{max})\nonumber\\
=&\sum_{i\in\mathcal{N}} \sum_{u\in\mathcal{U}_i}\sum_{j\in\mathcal{N}_u}(Vc_{com,u} + \tilde{B}^t_j \kappa (f_j)^2 - \gamma_j)\beta^t_{u,j} - \frac{1}{2\epsilon}(\beta^t_{u,j})^2 + \sum_{i\in\mathcal{N}}\gamma_i(f_i/\rho - 1/d^{max}).   \label{Lagrangian}
\end{align*}
Because the third term $\sum_{i\in\mathcal{N}}\gamma_i(f_i/\rho - 1/d^{max})
$ is independent of the computation load balancing strategy $\bm\beta^t$, the relaxed problem is
\begin{subequations}
	\begin{align}
	\max_{\bm\beta^t}&~~~ \sum_{i\in\mathcal{N}} \sum_{u\in\mathcal{U}_i}\sum_{j\in\mathcal{N}_u}(Vc_{com,u} + \tilde{B}^t_j \kappa (f_j)^2 - \gamma_j)\beta^t_{u,j} - \frac{1}{2\epsilon}(\beta^t_{u,j})^2\\
	\text{subject to}&~~~\sum_{j\in\mathcal{N}_u}\beta^t_{u,j} \leq \lambda^t_{u}, \forall i, \forall u\in\mathcal{U}_i
	\end{align}
\end{subequations}

After this relaxation, the problem can be divided into two levels. At the lower level, each BS seeks to optimize $\bm\beta^t$ for given $\bm\gamma$. The subproblem for BS $i$ is
\begin{subequations}
	\begin{align}
	\max_{\bm\beta^t_i}&~~~ \sum_{u\in\mathcal{U}_i}\sum_{j\in\mathcal{N}_u}(Vc_{com,u} + \tilde{B}^t_j \kappa (f_j)^2 - \gamma_j)\beta^t_{u,j} - \frac{1}{2\epsilon}(\beta^t_{u,j})^2 \label{lower}\\
	\text{subject to}&~~~\sum_{j\in\mathcal{N}_u}\beta^t_{u,j} \leq \lambda^t_{u}, \forall u\in\mathcal{U}_i
	\end{align}
\end{subequations}
The above problem is a QP. Its optimal value is denoted as $L_i(\bm\gamma)$. Since the problem is strictly concave, the optimal solution is unique.

At the higher level, each BS $i$ seeks to solve the dual problem:
\begin{align}
\min_{\bm\gamma \geq 0}&~~\sum_{i\in\mathcal{N}}(L_i(\bm\gamma) + \gamma_i(f_i/\rho - 1/d^{max}). \label{higher}
\end{align}
Since the lower level problem has a unique optimal solution, the higher level dual problem is differentiable. Therefore, we use the gradient descent method to solve this problem in an iterative manner. In each iteration $k$, the BSs exchange the current values of $\bm\gamma^{(k)}$ and each BS $i$ solves the lower level problem \eqref{lower} using the current iteration $\bm\gamma^{(k)}$ to obtain the optimal $\bm\beta^{t, (k)}_i$. Then the BSs exchange the derived $\bm\beta^{t, (k)}_i$ and use it to update the Lagrangian $\bm\gamma^{(k+1)}$ for the next iteration as follows:
\begin{align}
\gamma_i^{(k+1)} = [\gamma_i^{(k)} - \delta^{(k)}(f_i/\rho - 1/d^{max} - \sum_{u}\beta^{t,(k)}_{u,i}(\bm\gamma^{(k)}))]^+, \label{update}
\end{align}
where $\delta^{(k)} > 0$ is the step size of the subgradient descent method. The dual variable will converge to the optimal $\bm\gamma^*$ as $k \to \infty$. Fig. \ref{fig:distralgillu} illustrates the procedure of the iterative algorithm for one iteration. Since the Slater's condition (i.e., the primal problem is convex and there exists a feasible solution) is satisfied for the primal problem, the strong duality holds. As a result, the optimal solution $\bm\beta^t$ for the primal problem can also be derived through $\bm\beta^{t, (k)}(\bm\gamma^{(k)})$ with $k \to \infty$. We therefore have solved the primal problem. Algorithm 2 summarizes the iterative algorithm which solves the computation load balancing problem in a distributed manner.

\begin{figure}[t]
	\centering
	\includegraphics[width=1 \linewidth]{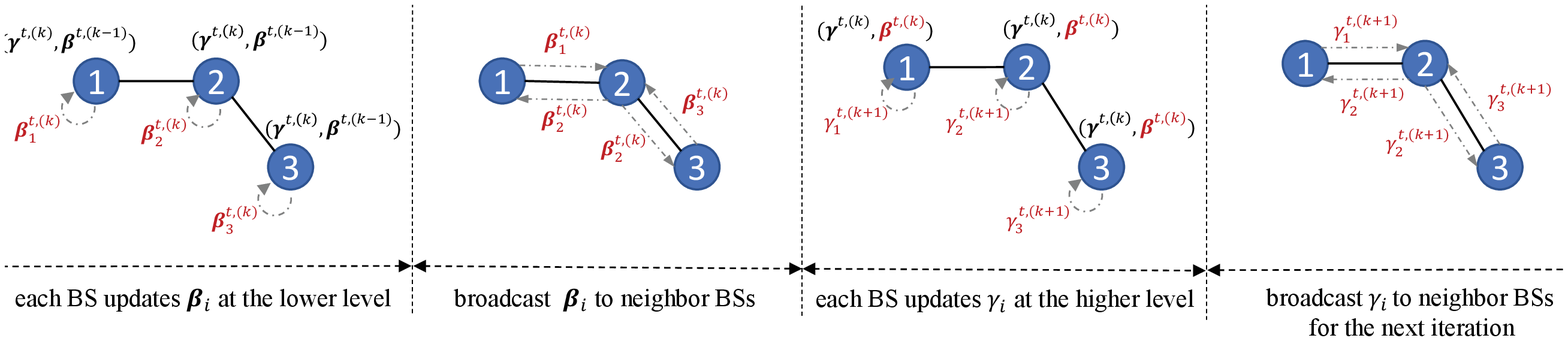}
	%\vspace{-0.2 in}
	\caption{Illustration of the iterative algorithm for one iteration.}
	\label{fig:distralgillu}
\vspace{-20pt}
\end{figure}

\begin{algorithm}[htb]
	\caption{The Distributed Algorithm That Iteratively Solves Per-Slot Computation Load Balancing}
	\begin{algorithmic}[1]
		\State \textbf{Input}: $\tilde{B}^t_i$ for each BS $i$ at the current time $t$
		\State \textbf{Output}: Slot $t$'s computation load balancing $\bm\beta^t$
        \State Each BS $i$ broadcast $\tilde{B}^t_i$ to its neighbor BSs.
		\For{iteration $k$}
            \State  BS $i$ broadcasts its $\gamma^{(k)}_i$ to its neighbor BSs, $\forall i \in \mathcal{N}$
            \State  BS $i$ solves \eqref{lower} for $\bm\beta^{t, (k)}_i$, $\forall i \in \mathcal{N}$
            \State  BS $i$ broadcasts its solution $\bm\beta^{t, (k)}_i$ to its neighbor BSs, $\forall i \in \mathcal{N}$
            \State  BS $i$ updates $\gamma^{(k+1)}_i$ according to \eqref{update}, $\forall i \in \mathcal{N}$
		\EndFor
	\end{algorithmic}
\end{algorithm}

\section{Performance Analysis}
In this section, we analyze the performance of GLOBE. To facilitate our exposition, we introduce the following auxiliary notations: $c^{max}_{tx} \triangleq \max_u c_{tx, u}$, $c^{max}_{com} \triangleq \max_u c_{com, u}$, $p^{min} \triangleq \min_{i,u,t} p^t_{i,u}$ and $f^{min} \triangleq \min_i f_i$. Let $c^{max} \triangleq \max\{c^{max}_{tx}/p^{min},c^{max}_{com}/(\kappa (f^{min})^2)\}\}$. Moreover, we assume that $E_i(\bm\alpha^t)$ is upper bounded by $E^{max}_{tx}$ and $E_i(\bm\beta^t)$ is upper bounded by $E^{max}_{com}$. This assumption holds when the communication traffic/computation task arrivals are bounded. Let $E^{max} = E^{max}_{tx} + E^{max}_{com}$.

\begin{lemma} \label{lemma:perturbation}
For any $B^{max} > E^{max} + \mathcal{E}^{max} + g^{max}$, by choosing
\begin{align}
0\leq V \leq \frac{B^{max} - E^{max} - \mathcal{E}^{max} - g^{max}}{c^{max}}
\end{align}
and $\theta = V c^{max} + E^{max}$, the battery level $B^t_i$ is bounded in $[0, B^{max}], \forall i$, and the energy causality constraint is satisfied in every time slot.
\end{lemma}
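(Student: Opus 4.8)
The plan is to argue by induction on $t$, maintaining the invariant that $B^t_i\in[0,B^{max}]$ for every BS $i$, and establishing along the way that the energy causality constraint $E_i^t(\bm\alpha^t,\bm\beta^t)\le B^t_i$ holds in each slot. The base case is the initial condition $B^1_i\in[0,B^{max}]$. The real content of the inductive step is that the none-or-all / threshold structure of the optimal solution to \textbf{P3} guarantees that BS $i$ consumes energy only when its battery is sufficiently charged, and acquires energy (by harvesting or purchase) only when its battery is sufficiently depleted. The choice $\theta=Vc^{max}+E^{max}$ together with the upper bound on $V$ are precisely what align these two thresholds with the endpoints $0$ and $B^{max}$.

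The key claim driving both causality and the lower bound is: if $E_i^t>0$ then $B^t_i\ge E^{max}$. I would establish this from the structure of the transmission and computation subproblems. For transmission, BS $i$ serves user $u$ only when the coefficient $Vc_{tx,u}+\tilde B^t_i\,p^t_{i,u}\ge 0$, which rearranges to $\tilde B^t_i\ge -Vc_{tx,u}/p^t_{i,u}\ge -Vc^{max}_{tx}/p^{min}\ge -Vc^{max}$, hence $B^t_i=\tilde B^t_i+\theta\ge E^{max}$. For computation, the optimal (regularized, dualized) allocation makes $\beta^t_{u,j}>0$ only when its marginal value at zero is positive, i.e. $Vc_{com,u}+\tilde B^t_j\kappa(f_j)^2-\gamma_j^*>0$; since $\gamma_j^*\ge 0$ this forces $\tilde B^t_j\ge -Vc_{com,u}/(\kappa(f_j)^2)\ge -Vc^{max}$ and again $B^t_j\ge E^{max}$. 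Thus whenever any component of the energy consumption is positive we have $B^t_i\ge E^{max}\ge E_i^t$, which is exactly causality; when $E_i^t=0$ causality is immediate given $B^t_i\ge 0$. The lower bound then follows at once: since $e^t_i,g^t_i\ge 0$ and $B^t_i-E_i^t\ge 0$, the first argument of the $\min$ in \eqref{dynamics} is nonnegative, so $B^{t+1}_i\ge 0$.

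For the upper bound I would invoke the complementary threshold: harvesting requires $\tilde B^t_i\le 0$ and purchase requires $Vc^t_{grid}+\tilde B^t_i\le 0$, so energy is added only when $B^t_i\le\theta$. In that case $B^t_i-E_i^t+e^t_i+g^t_i\le \theta+\mathcal E^{max}+g^{max}=Vc^{max}+E^{max}+\mathcal E^{max}+g^{max}$, which the admissible range of $V$ makes at most $B^{max}$; when no energy is added the quantity does not exceed $B^t_i\le B^{max}$. Hence $B^{t+1}_i\le B^{max}$, and in fact the truncation by $B^{max}$ in \eqref{dynamics} never binds — a fact that will be convenient later when summing \eqref{dynamics} to recover \eqref{relaxed_causality}. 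This closes the induction.

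I expect the main obstacle to be the causality step, and specifically making the computation-side threshold argument airtight through the quadratic regularization and the Lagrangian dual decomposition: one must verify that the sign condition characterizing a positive $\beta^t_{u,j}$ at the converged dual optimum $\gamma^*$ still yields $\tilde B^t_j\ge -Vc^{max}$, using $\gamma_j^*\ge 0$ together with the definition $c^{max}=\max\{c^{max}_{tx}/p^{min},\,c^{max}_{com}/(\kappa(f^{min})^2)\}$. Once the two thresholds are pinned down, the transmission side and the two battery-bound bookkeeping steps are routine sign-chasing.
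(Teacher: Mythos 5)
Your proof is correct and rests on exactly the same two facts as the paper's own proof: under the per-slot optimal solution a BS acquires energy only when $B^t_i\le\theta$ (i.e., $\tilde B^t_i\le 0$ for harvesting, $Vc^t_{grid}+\tilde B^t_i\le 0$ for purchase) and consumes energy only when $B^t_i\ge\theta-Vc^{max}=E^{max}$, after which the choice of $\theta$ and the admissible range of $V$ close the bookkeeping; the paper merely organizes this as a three-region case analysis on $B^t_i$ (above $\theta$, between $\tilde\theta^t_i$ and $\theta$, below $\tilde\theta^t_i$, with $\tilde\theta^t_i\ge E^{max}$ playing the role of your consumption threshold) instead of your induction with two contrapositive claims. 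The only substantive difference is that the paper derives the no-consumption threshold directly from the signs of the coefficients in the LPs \eqref{TLB} and \eqref{CLB}, while you route the computation side through the regularized QP and its dual with $\gamma^*_j\ge 0$ --- a valid, slightly more careful variant that is faithful to what the distributed algorithm actually solves.
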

\begin{proof}
Define $\tilde{\theta}^t_i \triangleq \theta - V\cdot\max_u\max\{c_{tx,u}/p^t_{i,u}, c_{com,u}/(\kappa(f_i)^2)\}$. Using the definition of $c^{max}$, we must have $\tilde{\theta}^t_i  \geq E^{max}$. Consider the following three cases based on the value of $B^t_i$.

Case 1: $B^t_i \in [\theta, B_{max}]$. Our analysis for the optimal energy harvesting and purchase problem before shows that if $B^t_i > \theta$, then BS $i$ will not harvest or purchase any energy, namely $e^{t*}_i = 0$ and $g^{t*}_i = 0$. According to the battery dynamics, $B^{t+1}_i \leq B^t_i$ and hence $B^{t+1}_i$ will also be less than $B_{max}$. On the other hand, because $\theta > E_{max}$, the energy causality constraint $E^{t*}_i < B^t_i$ is satisfied and the next slot battery will be greater than 0 since $B^{t+1}_i = B^t_i - E^{t*}_i > \theta - E^{max} > 0$.

Case 2: $B^t_i \in [\tilde{\theta}^t_i, \theta]$. In this case, we have $e^{t*}_i \leq \mathcal{E}^{max}$, $g^{t*}_i \leq g^{max}$, $E^{t*}_{tx, i} \leq E^{max}_{tx}$ and $E^{t*}_{com, i} \leq E^{max}_{com}$. Therefore,
\begin{align}
B^{t+1}_i \leq \theta + \mathcal{E}^{max} + g^{max} = Vc^{max} + E^{max} + \mathcal{E}^{max} + g^{max} \leq B^{max}
\end{align}
where the second equality is obtained by plugging in the definition of $\theta$, and the last inequality follows the chosen value range of $V$. On the other hand, because $\tilde{\theta}^t_i \geq E^{max}$, the energy causality constraint is satisfied and we have $B^{t+1}_i \geq 0$.

Case 3: $B^t_i \in [0, \tilde{\theta}^t_i]$. We first investigate $E^{t*}_{tx,i}$ and $E^{t*}_{com,i}$. The objective function in the transmission load balancing problem \eqref{TLB} can be rearranged as
\begin{align}
\sum_{i\in\mathcal{N}} \left( \sum_{u\in\mathcal{U}_i}Vc_{tx,u}\sum_{j\in\mathcal{N}_u}\alpha^t_{u,j} + \tilde{B}^t_{i}\sum_{u} p^t_{i,u}\alpha^t_{u,i}\right)
=\sum_{i\in\mathcal{N}}\sum_{u\in \mathcal{U}}(Vc_{tx,u} + \tilde{B}^t_i p^t_{i,u})a^t_{u,i}.
\end{align}
If $Vc_{tx,u} + \tilde{B}^t_i p^t_{i,u} < 0, \forall u$, then the optimal transmission load balancing strategy does not require BS $i$ to transmit any traffic for any user, namely $a^t_{u,i} = 0, \forall u$. This condition is equivalent to $B^t_i < \theta - Vc_{tx,u}/p^t_{i,u}$ using the definition of $\tilde{B}^t_i$. Further, using the definition of $\tilde{\theta}^t_i$, we have $B^t_{i} < \tilde{\theta}^t_i < \theta - Vc_{tx,u}/p^t_{i,u}$. Therefore, BS $i$ does not incur any transmission energy consumption, namely $E^{t*}_{tx,i} = 0$.

Similarly, the objective function in the computation load balancing problem \eqref{CLB} can be rearranged as
\begin{align}
\sum_{i\in\mathcal{N}}\left(\sum_{u\in\mathcal{U}_i}Vc_{com,u}\sum_{j\in\mathcal{N}_u}\beta^t_{u,j} + \tilde{B}^t_i \kappa (f_i)^2\sum_u \alpha^t_{u,i} \right)
= \sum_{i\in\mathcal{N}}\sum_{u\in\mathcal{U}}(Vc_{com,u} +\tilde{B}^t_i\kappa(f_i)^2)\beta^t_{u,i}.
\end{align}
If $Vc_{com,u} +\tilde{B}^t_i\kappa(f_i)^2 < 0, \forall u$, then the optimal computation load balancing strategy does not require BS $i$ to perform any computation for any user, namely $\beta^t_{u,i} = 0, \forall u$. This condition is equivalent to $B^t_i < \theta - Vc_{com,u}/k(f_i)^2$. Using the definition of $\tilde{\theta}^t_i$, we  have this condition satisfied. Therefore, BS $i$ does not incur any computation energy consumption, namely $E^{t*}_{com,i} = 0$.

Because we have $e^{t*}_i \leq \mathcal{E}^{max}$ and $g^{t*}_i \leq g^{max}$, clearly the energy causality constraint is satisfied, $B^{t+1}_i \geq 0$ and $B^{t+1}_i \leq B^{max}$.
\end{proof}

Lemma 2 is of significant importance because it shows that GLOBE not only yields a feasible solution to the relaxed problem \textbf{P2} but also a feasible solution to the original problem \textbf{P1} since the energy causality constraint in each time slot is actually satisfied by running GLOBE, provided that the battery capacity is sufficiently large and the algorithm parameters are properly chosen. This result helps us to prove the performance guarantee of the proposed GLOBE algorithm.

Next we proceed to show the asymptotic optimality of the GLOBE algorithm, for which we first define the Lyapunov function as follows:
\begin{align}
\Psi^t \triangleq\frac{1}{2}\sum_{i\in\mathcal{N}}(\tilde{B}^t_i)^2=\frac{1}{2}\sum_{i\in\mathcal{N}}\left(B^t_i - \theta\right)^2.
\end{align}
The Lyapunov drift represents the expected change in the Lyapunov function from one time slot to another, which is defined as $\Delta^t = \mathbb{E}[\Psi^{t+1} - \Psi^t|\B^t]$, where the expectation is taken with respect to the random process associated the system, given the battery state $\B^t=[B^t_1,...,B^t_N]$. Assuming for now that the battery capacity is infinite, the battery state dynamics yields
\begin{align}
B^{t+1}_i - \theta = B^t_i - \theta - E^t_i(\bm\alpha^t, \bm\beta^t) + e^t_i + g^t_i, \forall i.
\end{align}
Squaring both sides of the above equation, we obtain,
\begin{align}
(B^{t+1}_i - \theta)^2 = (B^t_i - \theta)^2 + (E_i^t(\bm\alpha^t, \bm{\beta}^t) - e^t_i - g^t_i)^2  - 2(B^t_i - \theta)(E_i^t(\bm\alpha^t, \bm{\beta}^t) - e^t_i - g^t_i).
\end{align}
Notice that the term $(E_i^t(\bm\alpha^t, \bm{\beta}^t) - e^t_i - g^t_i)^2 \leq (E_i^t(\bm\alpha^t, \bm{\beta}^t))^2 + (e^t_i + g^t_i)^2 \leq (E_{tx}^{max} + E_{com}^{max})^2 + (\mathcal{E}^{max} + g^{max})^2 \triangleq 2D/N$ is upper-bounded by a constant $2D/N$. Using this bound and rearranging the above equation, we have
\begin{align}
(B^{t+1}_i - \theta)^2 - (B^t_i - \theta)^2
\leq 2D/N - 2(B^t_i - \theta)(E_i^t(\bm\alpha^t, \bm{\beta}^t) - e^t_i - g^t_i).
\end{align}
Using the above inequality and the definition of $\Delta^t$, we have
\begin{align}
\Delta^t \leq D - \mathbb{E}[\sum_{i\in\mathcal{N}}(B^t_i - \theta)(E_i^t(\bm\alpha^t, \bm{\beta}^t) - e^t_i - g^t_i)|\B^t].
\end{align}
Adding the system cost multiplied by $V$, namely $V\cdot\mathbb{E}[\sum_{i\in\mathcal{N}} C_i(\bm\alpha^t, \bm\beta^t, g^t_i)|\B^t]$, to both sides and denoting $\Delta_V^t = \Delta^t + V\mathbb{E}[\sum_{i\in\mathcal{N}} C_i(\bm\alpha^t, \bm\beta^t, g^t)|\B^t]$, we have
\begin{align}
\Delta_V^t \leq D + \mathbb{E}[\sum_{i\in\mathcal{N}}\left(V C_i(\bm\alpha^t, \bm\beta^t, g^t_i) - \tilde{B}^t_i (E_i^t(\bm\alpha^t, \bm{\beta}^t) - e^t_i -g^t_i) \right)|\B^t].  \label{bound}
\end{align}

According to the theory of Lyapunov optimization (drift-plus-penalty method), the control actions are chosen for each time slot $t$ to minimize the bound on the modified Lyapunov drift function $\Delta^t_V$. Therefore, in each time slot $t$, we solve the per-time slot optimization problem \textbf{P3} to obtain load balancing strategies and energy harvesting and purchase strategies as in GLOBE.

Theorem 1 provides the theoretical performance guarantee of GLOBE.

\begin{theorem}\label{onlinealg_bound}
For any $V$, if the battery capacity satisfies $B^{max} \geq V c^{max} + E^{max} + \mathcal{E}^{max} + g^{max}$, then the proposed algorithm yields a feasible solution and the achievable time average system cost satisfies
\begin{align}
\lim_{T\to\infty} \frac{1}{T}\sum_{t=1}^T \sum_{i\in\mathcal{N}} \mathbb{E}\left[C_i(\bm\alpha^t, \bm\beta^t, g^t_i)\right] \leq C^*_1 + D/V
\end{align}
where $D$ is a constant.
\end{theorem}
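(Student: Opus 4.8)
The plan is to establish the theorem in two stages following the drift-plus-penalty methodology: first feasibility, then the cost bound. The feasibility claim is essentially already secured by Lemma \ref{lemma:perturbation}. Indeed, the hypothesis $B^{max} \geq Vc^{max} + E^{max} + \mathcal{E}^{max} + g^{max}$ is exactly the condition of that lemma (taking $\theta = Vc^{max} + E^{max}$), so $B^t_i \in [0, B^{max}]$ for all $i,t$ and the energy causality constraint \eqref{causality} holds in every slot; hence GLOBE produces a solution feasible for \textbf{P1}. A crucial by-product is that the battery never saturates under GLOBE, so the truncating $\min$ in \eqref{dynamics} is never active. This is precisely what legitimizes the use of the untruncated drift inequality \eqref{bound}, which was derived under the provisional assumption of infinite capacity.

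For the performance bound, I would begin from \eqref{bound} and exploit the fact that GLOBE chooses its controls in each slot to minimize the right-hand side over all feasible per-slot decisions. Since the stationary randomized policy $\Pi$ of Lemma 1 is itself a feasible per-slot decision, substituting $\Pi$'s actions into the right-hand side can only enlarge it, giving
\begin{align}
\Delta_V^t \leq D + \mathbb{E}\left[\sum_{i\in\mathcal{N}}\left(V C_i(\bm\alpha^{\Pi,t}, \bm\beta^{\Pi,t}, g^{\Pi,t}_i) - \tilde{B}^t_i \big(E_i^t(\bm\alpha^{\Pi,t}, \bm{\beta}^{\Pi,t}) - e^{\Pi,t}_i - g^{\Pi,t}_i\big)\right)\Big|\B^t\right]. \nonumber
\end{align}
The decisive observation is that $\Pi$ is stationary and its actions depend only on the exogenous system state (arrivals, price, channels), hence are independent of the endogenous battery backlog $\B^t$. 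Conditioning on $\B^t$ therefore lets me pull each $\tilde{B}^t_i$ outside the expectation, and by \eqref{lemmaE} the factor $\mathbb{E}[E_i^t(\bm\alpha^{\Pi,t}, \bm{\beta}^{\Pi,t}) - e^{\Pi,t}_i - g^{\Pi,t}_i]$ vanishes, annihilating the entire battery-dependent term. Invoking \eqref{lemmaC} on the surviving cost term then collapses the bound to $\Delta_V^t \leq D + V C^*_2$.

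The final stage is a telescoping argument. I would take total expectations to remove the conditioning on $\B^t$, recall that $\Delta_V^t = \mathbb{E}[\Psi^{t+1} - \Psi^t] + V\mathbb{E}[\sum_{i\in\mathcal{N}} C_i(\bm\alpha^t, \bm\beta^t, g^t_i)]$, and sum over $t = 1,\dots,T$. The Lyapunov terms telescope, leaving
\begin{align}
V\sum_{t=1}^T \mathbb{E}\left[\sum_{i\in\mathcal{N}} C_i(\bm\alpha^t, \bm\beta^t, g^t_i)\right] \leq T\,(D + V C^*_2) + \mathbb{E}[\Psi^1] - \mathbb{E}[\Psi^{T+1}]. \nonumber
\end{align}
Dividing by $VT$, discarding the nonnegative term $\mathbb{E}[\Psi^{T+1}]$, and letting $T\to\infty$ gives a time-average cost bounded by $C^*_2 + D/V$, because $\mathbb{E}[\Psi^1]/(VT) \to 0$ (finite since the initial battery is bounded). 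The stated conclusion then follows from $C^*_2 \leq C^*_1$, which was established when \textbf{P2} was introduced as a relaxation of \textbf{P1}.

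I expect the main obstacle to be the second stage, specifically the justification that $\Pi$'s decisions are statistically independent of $\B^t$ so that $\tilde{B}^t_i$ may be factored out before invoking \eqref{lemmaE}. This independence hinges on $\Pi$ depending only on the exogenous randomness rather than on the battery queue, and it is the linchpin that converts the per-slot minimization of \eqref{bound} into a comparison against the long-run optimum $C^*_2$. Everything else is bookkeeping, modulo the already-secured boundedness from Lemma \ref{lemma:perturbation}, which both guarantees feasibility and renders the telescoped boundary terms asymptotically negligible.
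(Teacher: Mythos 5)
Your proposal is correct and follows essentially the same route as the paper's proof: feasibility via Lemma \ref{lemma:perturbation}, then the drift-plus-penalty bound \eqref{bound} compared against the stationary randomized policy $\Pi$ of Lemma 1 using \eqref{lemmaC} and \eqref{lemmaE}, and finally a telescoping/averaging step giving $C^*_2 + D/V \leq C^*_1 + D/V$. The only difference is that you spell out two details the paper leaves implicit---that under GLOBE the $\min$ truncation in \eqref{dynamics} is never active (so the untruncated drift derivation is legitimate), and that $\Pi$'s independence from $\B^t$ is what allows $\tilde{B}^t_i$ to be factored out before invoking \eqref{lemmaE}---which refines rather than changes the argument.
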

\begin{proof}
Consider the bound on the Lyapunov drift function \eqref{bound}. It is clear that the control actions $\bm\alpha^t, \bm\beta^t, \e^t, \g^t$ by our algorithm minimizes the bound on the Lyapunov function over all possible control actions. Comparing it with the control actions chosen according to the optimal oracle policy that achieves $C^*_2$, we have
\begin{align}
&\Delta^t + V\mathbb{E}[\sum_{i\in\mathcal{N}} C_i(\bm\alpha^t, \bm\beta^t, g^t_i)|\B^t] \nonumber\\
\leq& D + \mathbb{E}[\sum_{i\in\mathcal{N}}\left(V C_i(\bm\alpha^t, \bm\beta^t, g^t_i) - \tilde{B}^t_i (E_i^t(\bm\alpha^t, \bm{\beta}^t) - e^t_i -g^t_i) \right)|\B^t] \nonumber\\
\leq& D + \mathbb{E}[\sum_{i\in\mathcal{N}}\left(V C_i(\bm\alpha^{\Pi,t}, \bm\beta^{\Pi,t}, g^t_i) - \tilde{B}^t_i (E_i^t(\bm\alpha^{\Pi,t}, \bm{\beta}^{\Pi,t}) - e^{\Pi,t}_i -  g^{\Pi, t}_i)\right)|\B^t] \nonumber\\
=&V\mathbb{E}[\sum_{i\in\mathcal{N}} C_i(\bm\alpha^{\Pi,t}, \bm\beta^{\Pi,t}, g^{\Pi, t}_i)] + D - \mathbb{E}[\tilde{B}^t_i (E_i^t(\bm\alpha^{\Pi,t}, \bm{\beta}^{\Pi,t}) - e^{\Pi,t}_i - g^{\Pi_t}_i)|\B^t]. \nonumber
\end{align}
Plugging in \eqref{lemmaC} and \eqref{lemmaE}, taking the expectation on both sides and summing from $t = 0, ..., T-1$, normalizing by $T$ and taking the limit $T\to\infty$, we have
\begin{align}
V\lim_{T\to\infty} \frac{1}{T}\sum_{t=1}^T\sum_{i\in\mathcal{N}}\mathbb{E} \left[C_i(\bm\alpha^t, \bm\beta^t)\right] \leq V C^*_2 + D,
\end{align}
where $D < \infty$ is a constant. The proof is completed with $C^*_2\leq C^*_1$.
\end{proof}
Theorem 1 proves that GLOBE can achieve the minimum cost achievable by the genie-aided offline algorithm within a bounded deviation, without foreseeing the future information. Moreover, it formalizes a critical tradeoff between the battery capacity and the achievable system performance: the achievable system performance improves with the increase of the battery capacity. In particular, the system performance can be made arbitrarily close to optimum if the battery capacity is large enough. This result provides profound guidelines for EH-powered MEC network design and deployment, especially on the  battery design.

\section{Simulation Results}
In this section, we evaluate the performance of GLOBE through simulations. We consider $N=5$ BSs, who are able to perform computation offloading to only some other BSs. To be specific, we let $\mathcal{M}_i\in \mathcal{N}$ and $|\mathcal{M}_i|=3, \forall i$. The downlink data traffic arrival at BS $i$  is modeled as a Poisson process with $\mu_i^t\in[0,10]$ unit/sec. The expected size of each data traffic $\mathbb{E}[\omega]$ is set as 100 Mbits. The transmitting power of BS $i$ is $P_{tx,i}=1$ W, the noise power spectral density is $\sigma^2=0.01$ W/Hz, and the bandwidth is $W=20$ MHz. The downlink traffic dropping costs are $c_{tx,i}=10, \forall i$. The computation task arrival at BS $i$ is modeled as a Poisson process with $\lambda^t_i\in[0, 10]$ with the mean task size of 1 M unit. The CPU speed is $f_i= 2.4$ GHz and the expected number of CPU cycles required for each computation task is $\rho=8\times10^5$. The computation constraint is $d^{max}=1$ ms. The energy consumption parameter is chosen as $\kappa=2.5\times10^{-22}$. The cost of dropping one unit of computation task is $c_{com,i}=0.01$. The harvested energy is modeled as a uniform distribution and satisfies $\mathcal{E} \in [0,10]$. The unit energy price of the alternative grid power $c_{grid}^t$ is modeled as a uniform distribution with mean 1 if not specified. The proposed GLOBE is compared with following three benchmark solutions:
\begin{itemize}
	\item Non-GLB Stochastic Optimization (\textbf{SO-NG}): SO-NG considers the stochasticity in communication traffic, computation task arrivals and energy harvesting (i.e., battery dynamics and energy causality) and aims to minimize the long-term system cost. However, in this case, GLB is not enabled in the network. This problem can be solved online by leveraging the \emph{Lyapunov Optimization with Perturbation} similar to that in GLOBE.
	
	\item Myopic Optimization with GLB (\textbf{MO-G}): MO-G optimizes GLB and admission control to minimize the system cost in the current time slot. The decisions are made without concerns of the future energy harvesting and forthcoming communication/computation workload.
	
	\item Non-GLB Myopic Optimization (\textbf{MO-NG}): MO-NG is the most naive scheme which does not consider the geographical load balancing or the long-term system performance. Each BS simply tries to serve all computation workload and communication traffic given the available energy in the battery and drops whatever cannot be fulfilled.
	
\end{itemize}

%\textbf{LyO without GLB} minimizes the long-term system cost considering the energy harvesting constraints (i.e. battery dynamics and energy causality) by leveraging the Lyapunov technique. However, in this case, GLB is not performed; \textbf{Myopic with GLB} performs GLB and admission control to minimize the system cost in the current time slot by performing myopic optimization without considering the energy harvesting constraints. \textbf{Myopic without GLB} is the most naive scheme which does not perform GLB or consider the energy harvesting constraints. Each BS simply tries to serve all computation workload and data traffic given the available energy and drops whatever cannot be fulfilled.

\subsection{Runtime Performance Comparison}

Fig. \ref{fig:runtime_comp} compares the runtime performance of GLOBE and three benchmarks. We focus on two metrics: the \emph{time-average cost} of BSs in Fig. \ref{fig:Tave_cost} and the \emph{time-average battery level} in Fig. \ref{fig:Tave_battery}. It can be observed from Fig. \ref{fig:Tave_cost} that generally GLOBE achieves the lowest long-term system cost compared to the other three benchmarks. Specifically, GLOBE reduces the system cost by nearly 50\% compared to MO-NG and nearly 30\% compared to the second best scheme, i.e., SO-NG. Moreover, we see that addressing both issues (i.e., loading balancing of communication/computation and stochastic optimization for long-term system performance) provides considerable improvements to the system performance. Comparing the time-average costs of GLOBE and SO-NG (or MO-G and MO-NG), it can be concluded that enabling load balancing of communication traffic and computation workload among BSs helps to reduce the system cost. Comparing the time-average costs of GLOBE and MO-G, we see that a significant system cost reduction can be achieved in the long-run by carefully scheduling the energy consumption, harvesting and purchasing in each time slot. Another important feature of GLOBE is that it ensures the performance with bounded battery levels, thereby enabling practical implement. As can be seen in Fig. \ref{fig:Tave_battery}, the time-average battery levels of GLOBE and \emph{SO-NG} stabilize at a relatively low value by following the online decisions designed by \emph{Lyapunov optimization with perturbation}. This means that the performance of GLOBE is achievable with a small battery capacity. By contrast, the two myopic benchmarks require a very large battery capacity to implement these schemes.

\begin{figure}[h]
	\centering	
	\subfigure[System cost of BSs]{\label{fig:Tave_cost}
		\includegraphics[width= 0.4 \linewidth]{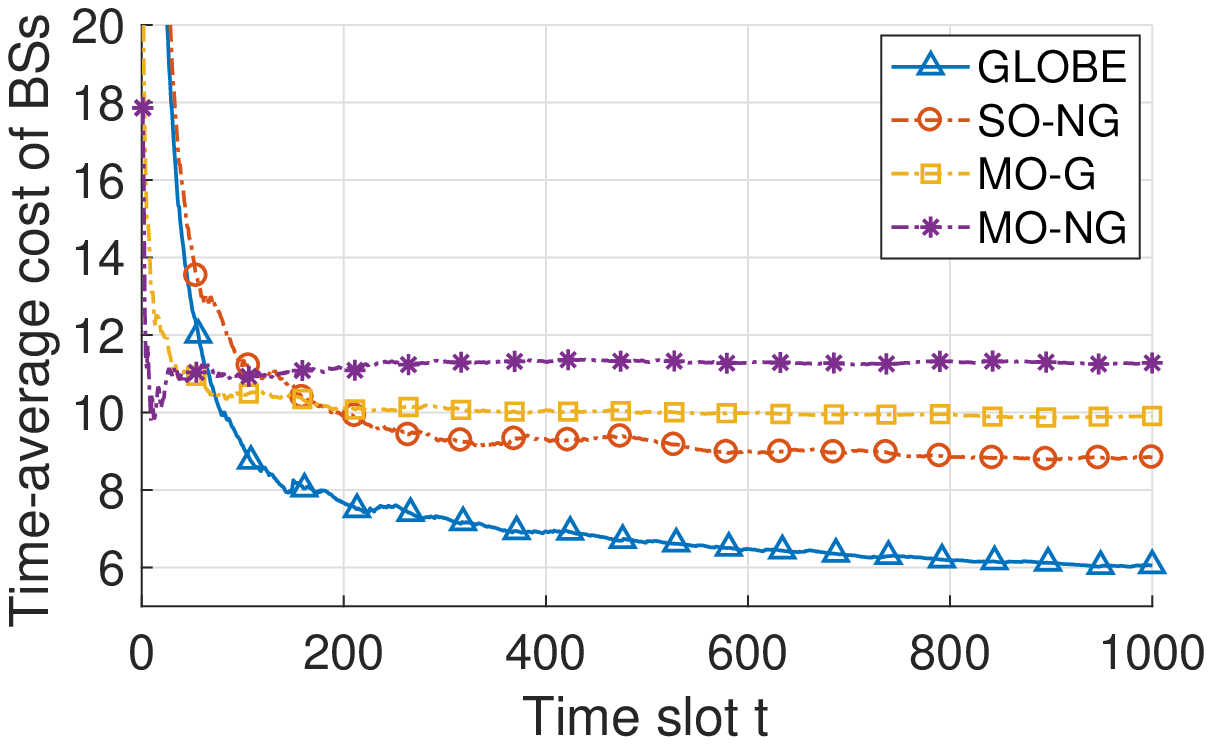}}
	\subfigure[Battery level of BSs]{\label{fig:Tave_battery}
		\includegraphics[width=0.4 \linewidth]{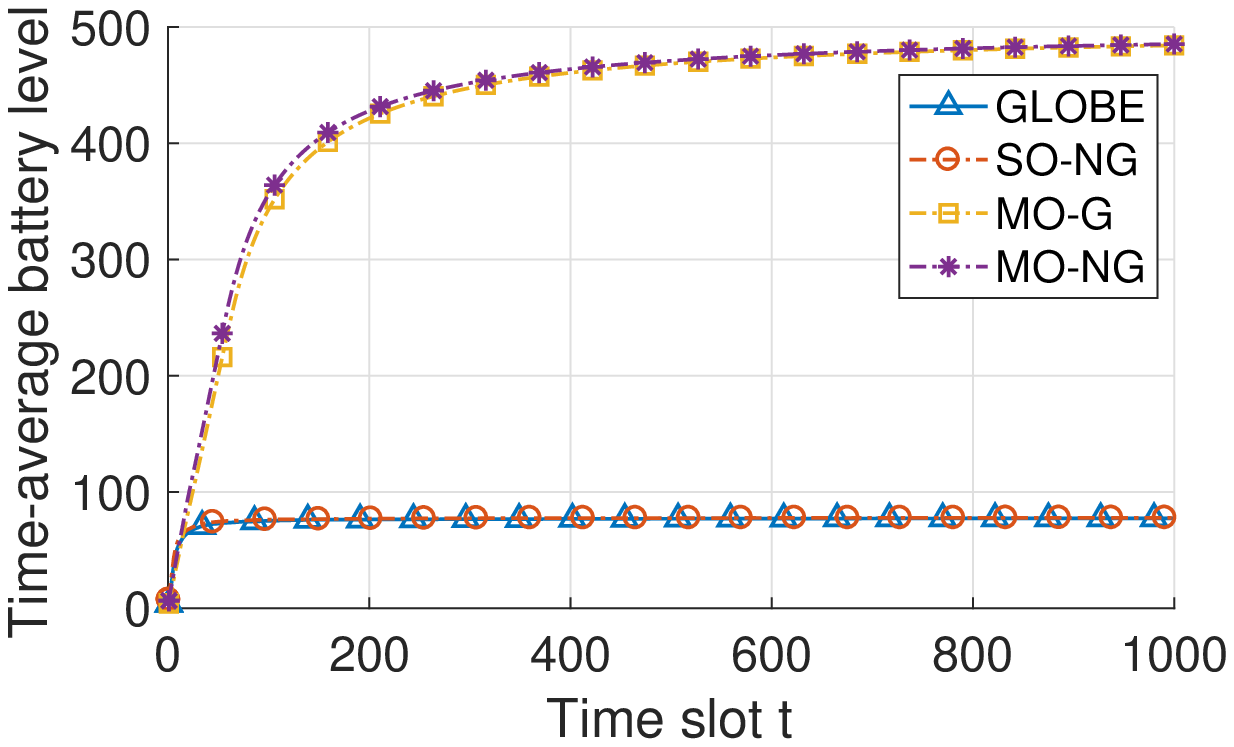}}
\vspace{-10pt}
	\caption{Runtime performance comparison of various solutions.}
	\label{fig:runtime_comp}
\vspace{-30pt}
\end{figure}

\subsection{Impact of Control Parameter $V$}
Fig. \ref{fig:impact_V} depicts the time-average system cost and time-average battery level achieved by GLOBE with various values of $V$. It shows that the system cost decreases with the increase in $V$. This is because a larger $V$ empathizes the cost minimization more in problem \textbf{P3}. However, a lower system cost is achieved at the price of a higher requirement on battery capacity. As shown in Fig. \ref{fig:Tave_cost_V}, the time-average battery level stabilizes at a higher value with a larger $V$, which means that a larger battery capacity is required to implement the algorithm.

Fig. \ref{fig:tradeoff} formally presents the trade-off between the system cost and the battery capacity. It clearly shows a $[O(1/V),O(V)]$ trade-off between the stabilized time-average cost and battery capacity, which is consistent with our analysis in Theorem \ref{onlinealg_bound}.  Moreover, we see that the stabilized battery level of BSs closely follows the designed perturbation parameter $\theta$.

\begin{figure}[h]
	\centering	
	\subfigure[Impact on system cost]{\label{fig:Tave_cost_V}
		\includegraphics[width= 0.4 \linewidth]{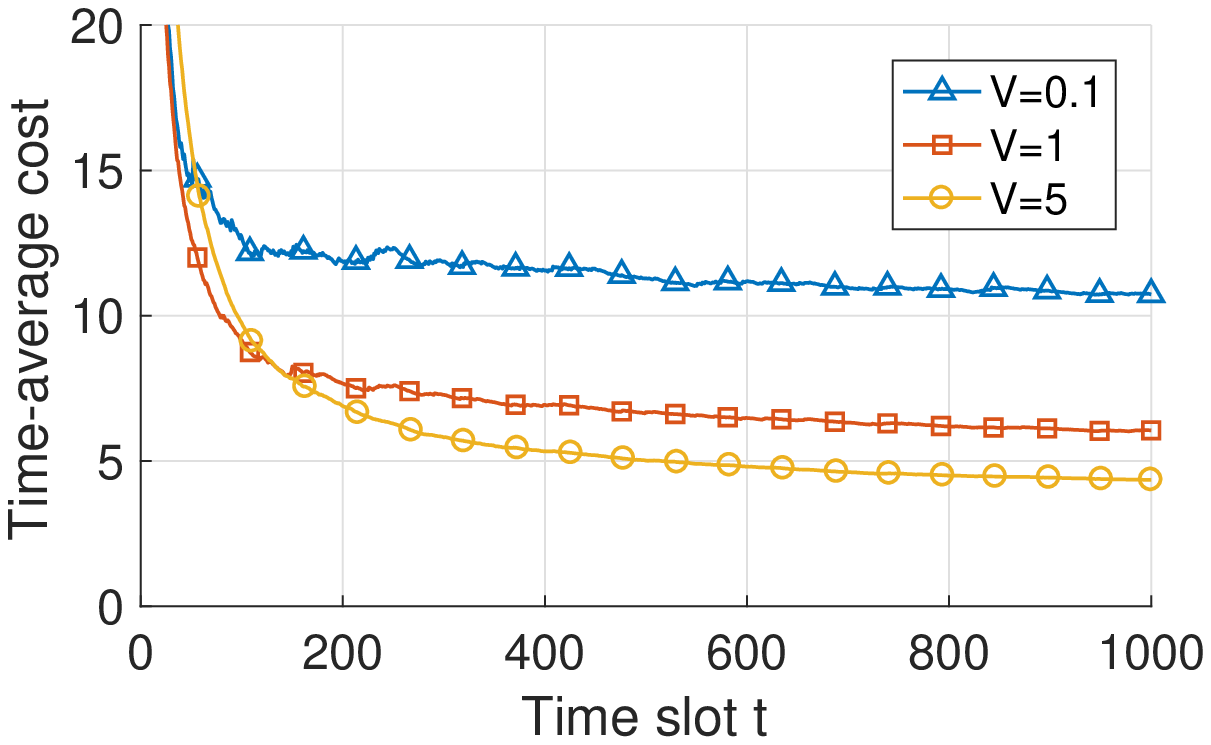}}
	\subfigure[Impact on battery level]{\label{fig:Tave_battery_V}
		\includegraphics[width= 0.4 \linewidth]{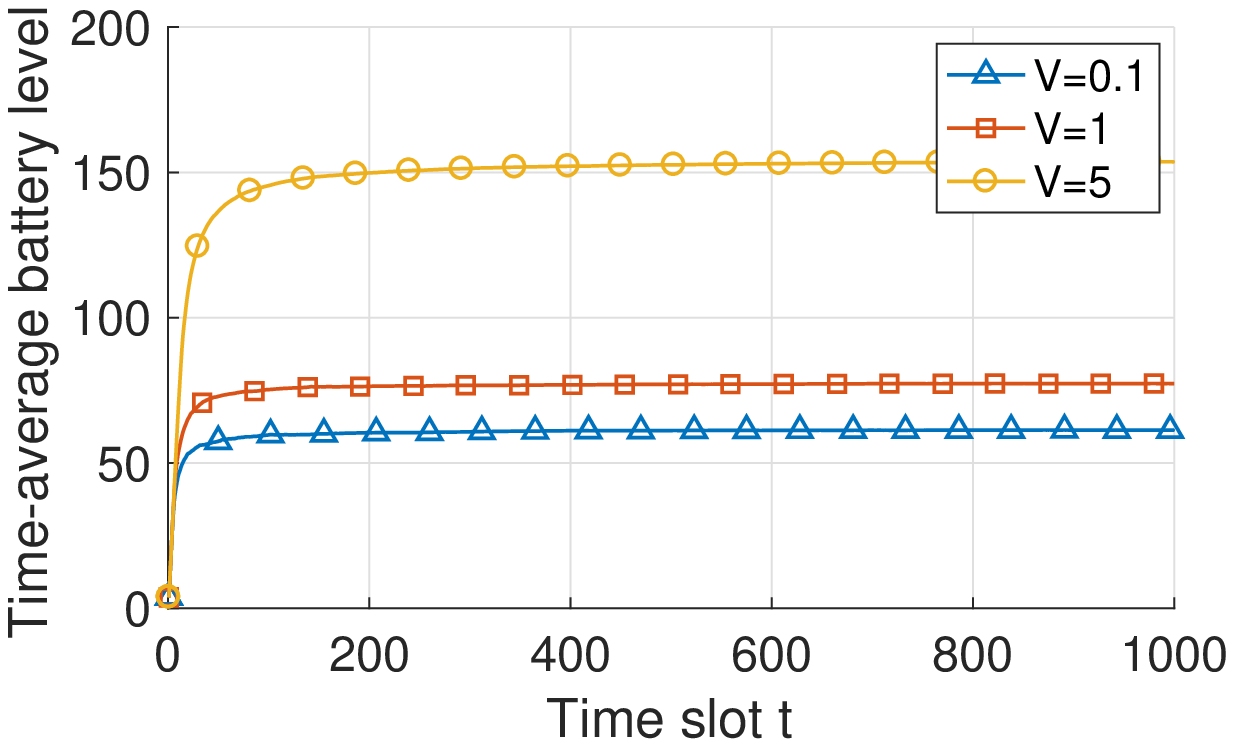}}
	\caption{Impact of control parameter $V$ to the performance of GLOBE.}
	\label{fig:impact_V}
\vspace{-20pt}
\end{figure}

\begin{figure}[h]
	\centering	
	\includegraphics[width=0.45 \linewidth]{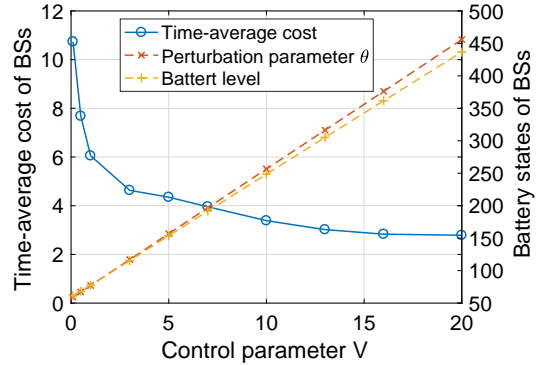}
	\vspace{-10pt}
	\caption{Trade-off between system cost and battery capacity.}
	\label{fig:tradeoff}
\vspace{-20pt}
\end{figure}

\subsection{Impact of Grid Power}
We also evaluate the impact of grid power on the algorithm performance. Fig. \ref{fig:grid_power} reports the performance of GLOBE with or without grid power. Moreover, we vary the expected market energy price to evaluate the role of energy price in GLOBE. We see from Fig. \ref{fig:Tave_cost_grid} that the system cost is reduced  with the supplement of grid power. This is due to the fact that the cost of purchasing grid power is usually lower than that of dropping communication/computation workloads, and hence the grid power can be used as a supplementary energy source when the energy harvesting is not sufficient. In the worst case, if the market energy purchasing costs more than the workload dropping, then GLOBE will simply choose not to purchase. Fig. \ref{fig:Tave_cost_grid} further indicates that the system cost is reduced along with the decrease in the market energy price since less fee is charged for the required grid power.

Fig. \ref{fig:Tave_battery_grid} depicts the time-average battery levels of these four cases. It is observed that the time-average battery levels of cases with grid power stabilize around the same value, i.e., the perturbation parameter. This is due to the fact that the design of perturbation parameter is independent of the marker energy price (as shown in Lemma \ref{lemma:perturbation}). For the case with no grid power, the stabilized battery level is slightly lower. This is because the BSs cannot maintain the desired battery level for future use without the grid power, since the energy harvesting may be insufficient at times.

\begin{figure}[h]
	\centering	
	\subfigure[Impact on system cost]{\label{fig:Tave_cost_grid}
		\includegraphics[width=0.45 \linewidth]{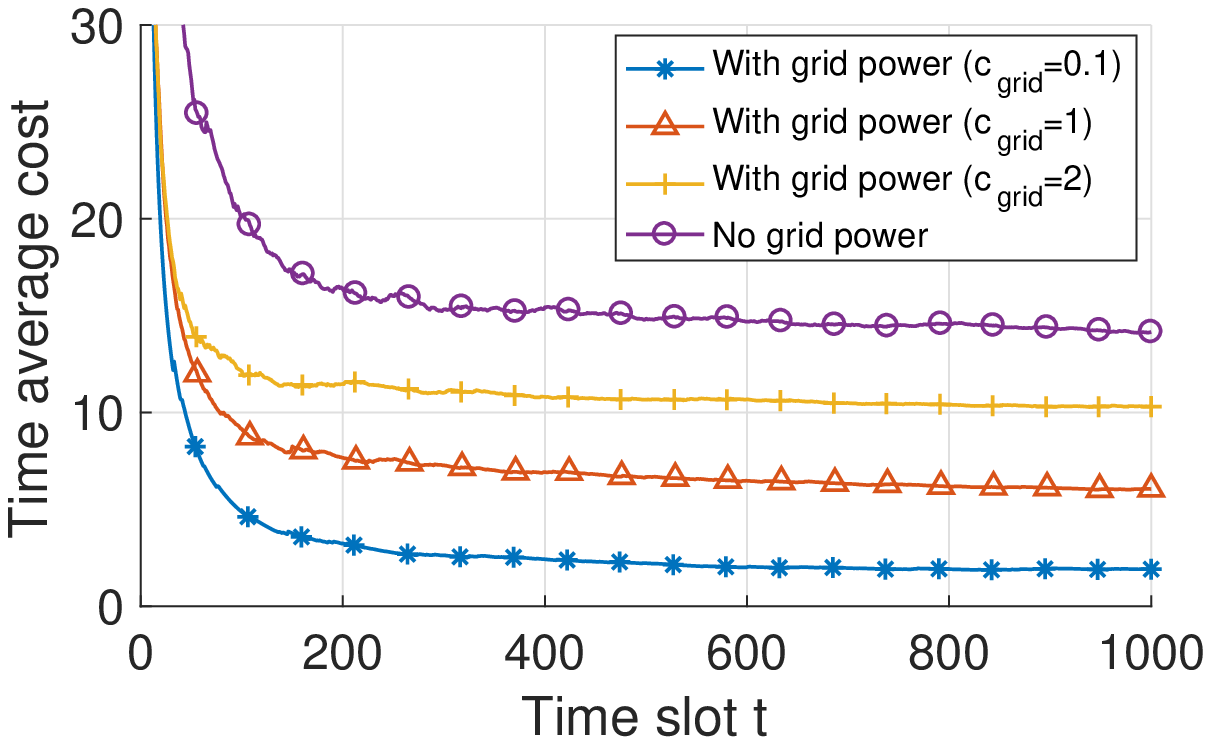}}
	\subfigure[Impact on battery level]{\label{fig:Tave_battery_grid}
		\includegraphics[width=0.45 \linewidth]{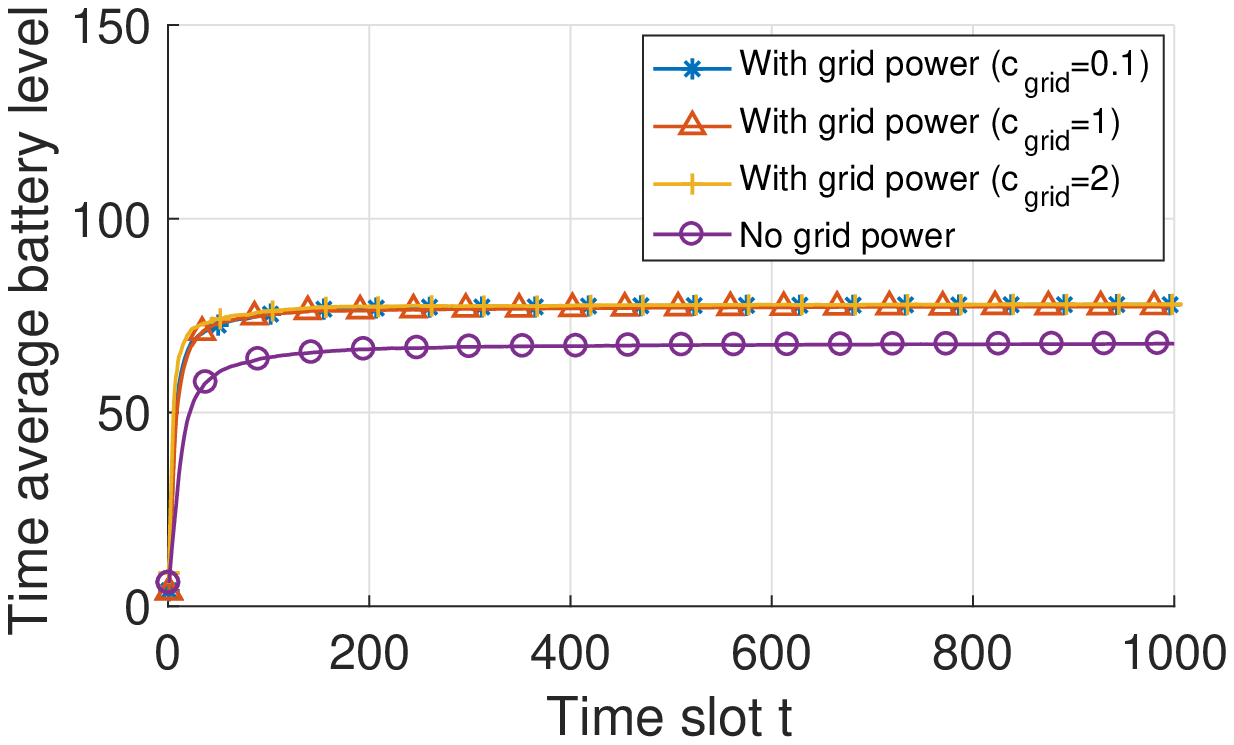}}
	\caption{Impact of grid power to the performance of GLOBE.}
	\label{fig:grid_power}
\vspace{-20pt}
\end{figure}

\subsection{Impact of Computation Workload Intensity}
Fig. \ref{fig:workload_intensity} depicts the time-average system costs with different levels of computation workload intensity. In general, the system cost grows with the increase in computation workload since larger computation workload incurs higher delay cost and energy consumption. We also see that if the computation workload intensity becomes too low or too high, the system cost achieved by GLOBE is almost the same as that achieved by SO-NG. This is reasonable because if the computation workload is extremely low at every BS, then the BSs can process the workload \emph{locally} with their own resources, while incurring  low delay cost and energy consumption. Hence, there is no need for load balancing. On the other hand, if the computation workload intensity is too high, every BS has already been overloaded, therefore, there is little benefit from loading balancing among BSs. A large system cost reduction is achieved when computation workload intensity is at the similar level of the system computation capacity. In this case, there are both many overloaded BSs and underloaded BSs in the system, and hence the geographical load balancing can better help to reduce the system cost.
\begin{figure}[h]
	\centering	
	\includegraphics[width=0.45 \linewidth]{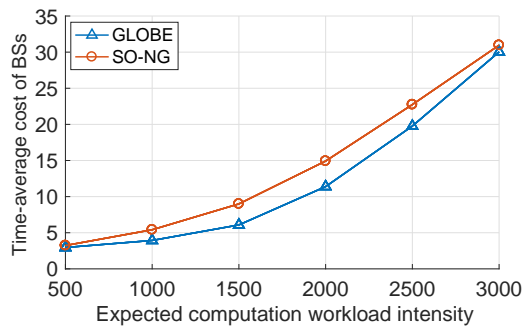}
\vspace{-10pt}
	\caption{Impact of computation workload intensity to the performance of GLOBE and SO-NG.}
	\label{fig:workload_intensity}
\vspace{-20pt}
\end{figure}

\subsection{Before versus After GLB}
Fig. \ref{fig:GLB} shows the distribution of computation workload and communication traffic before and after applying GLB in one particular time slot. It is worth noticing that the GLOBE does not simply balance the workload evenly among BSs. Instead, GLOBE distributes the workload based on the battery states of BSs. As can be seen in  Fig. \ref{fig:GLB}, BSs with higher battery level (e.g. BS 1 and BS 5) tend to serve more workload and traffic, which helps BSs with low battery level to avoid  battery depletion by conservatively serving less workload, thereby reducing the probability of purchasing energy from the power grid and decreasing the system cost.
\begin{figure}[h]
	\centering	
	\subfigure[GLB for computation workload]{\label{fig:GLB_computation}
		\includegraphics[width= 0.4 \linewidth]{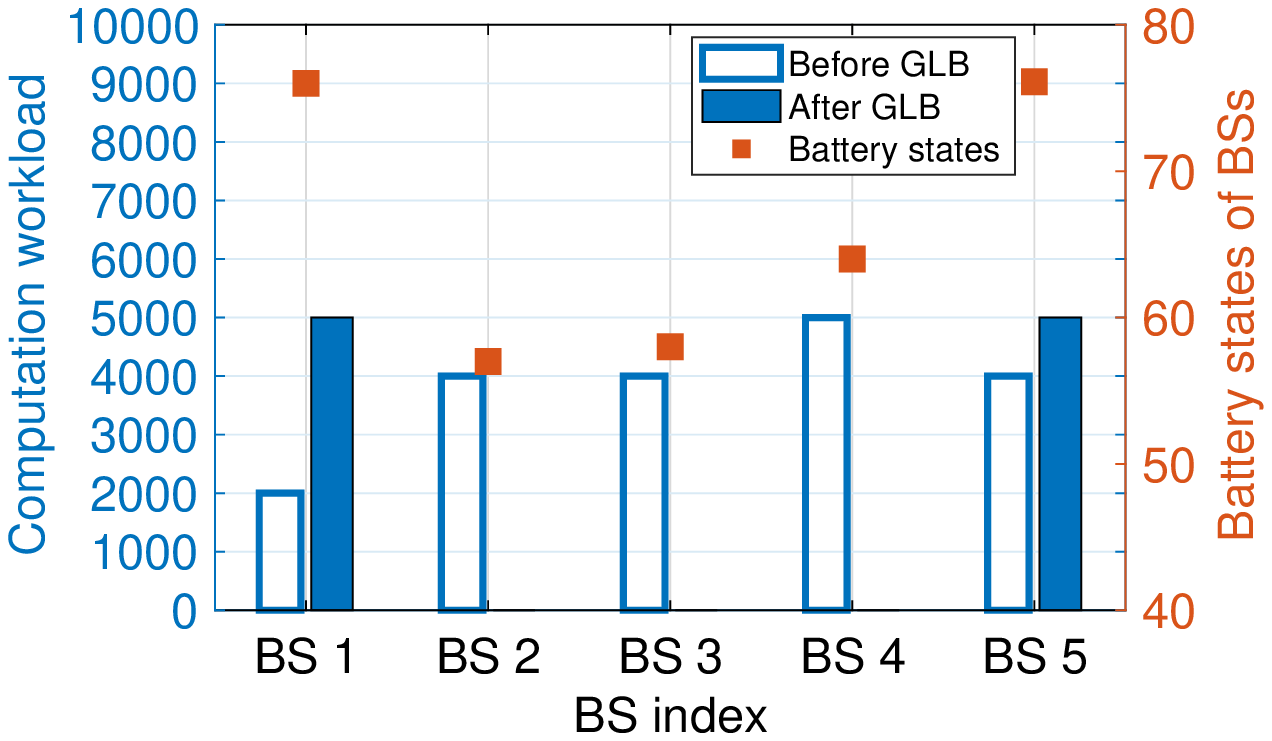}}
	\subfigure[GLB for communication traffic]{\label{fig:GLB_communication}
		\includegraphics[width= 0.4 \linewidth]{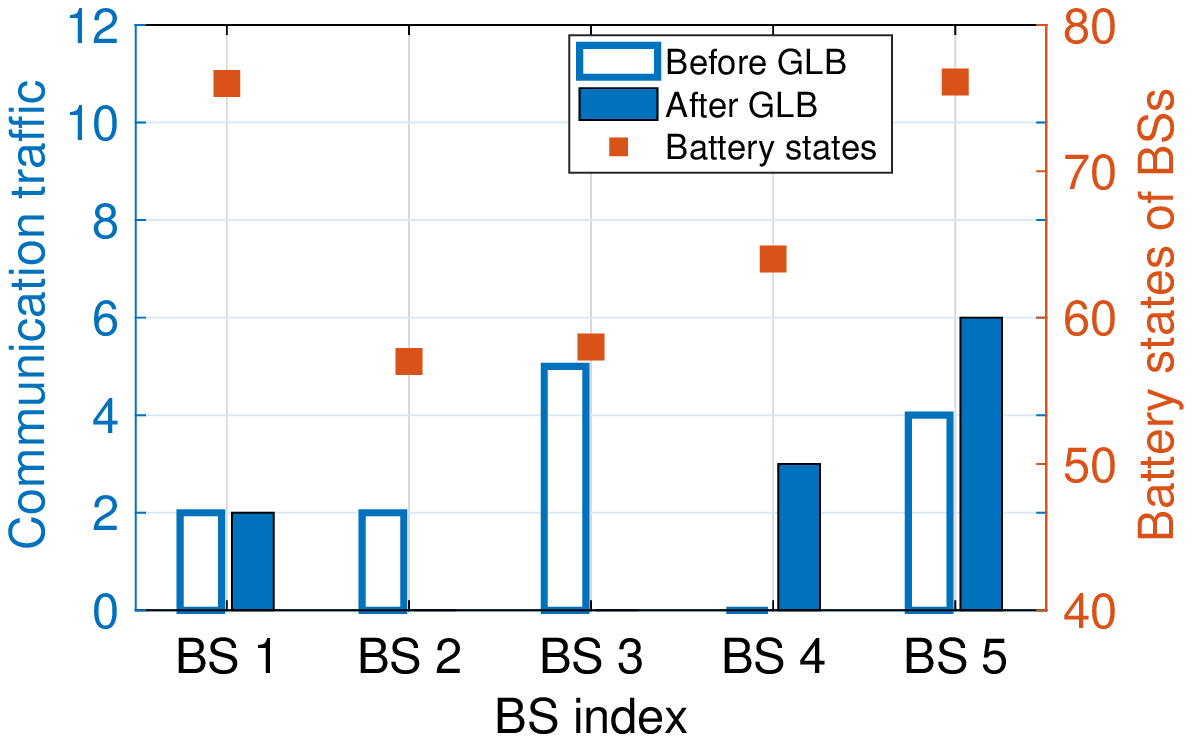}}
	\caption{Workload distribution before and after GLB.}
	\label{fig:GLB}
\vspace{-20pt}
\end{figure}

\subsection{Convergence of Distributed Algorithm}
Fig. \ref{fig:convergence} depicts the achieved objective values by substituting the solutions derived by the distributed algorithm into the objective functions of primal LP in \eqref{CLB} and regularized QP in \eqref{CLB regularized}. There are several observations worth pointing out. First, It can be seen from Fig. \ref{fig:QP_convergence} that the proposed distributed algorithm ensures the convergence to the optimal solution of the regularized QP and the optimal objective value of regularized QP is 329.60 in this particular case. Second, Fig. \ref{fig:LP_convergence} shows that the optimal solution to the regularized QP is exactly the optimal solution to the primal LP since this solution also achieves the optimal objective value (i.e., 331.06) of the primal LP problem.

\begin{figure}[h]
	\centering	
	\subfigure[Convergence of QP during iteration]{\label{fig:QP_convergence}
		\includegraphics[width=2.5 in]{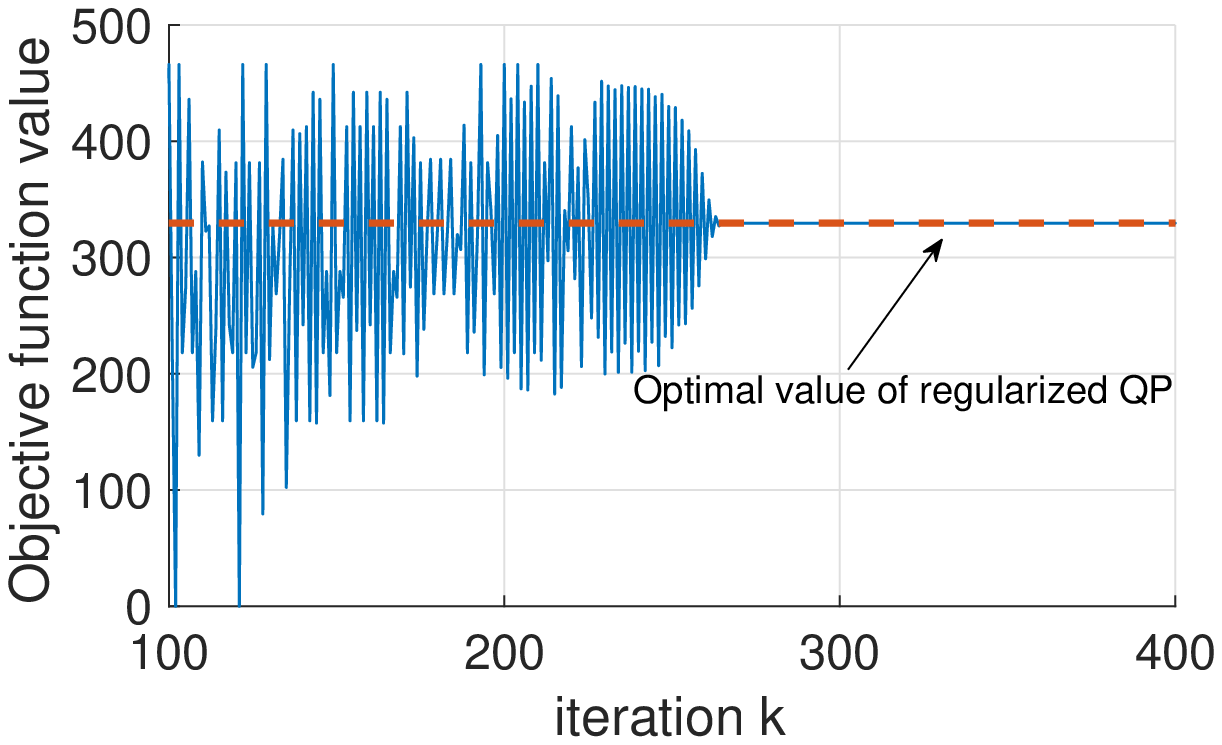}}
	\subfigure[Convergence of LP during iteration]{\label{fig:LP_convergence}
		\includegraphics[width=2.5 in]{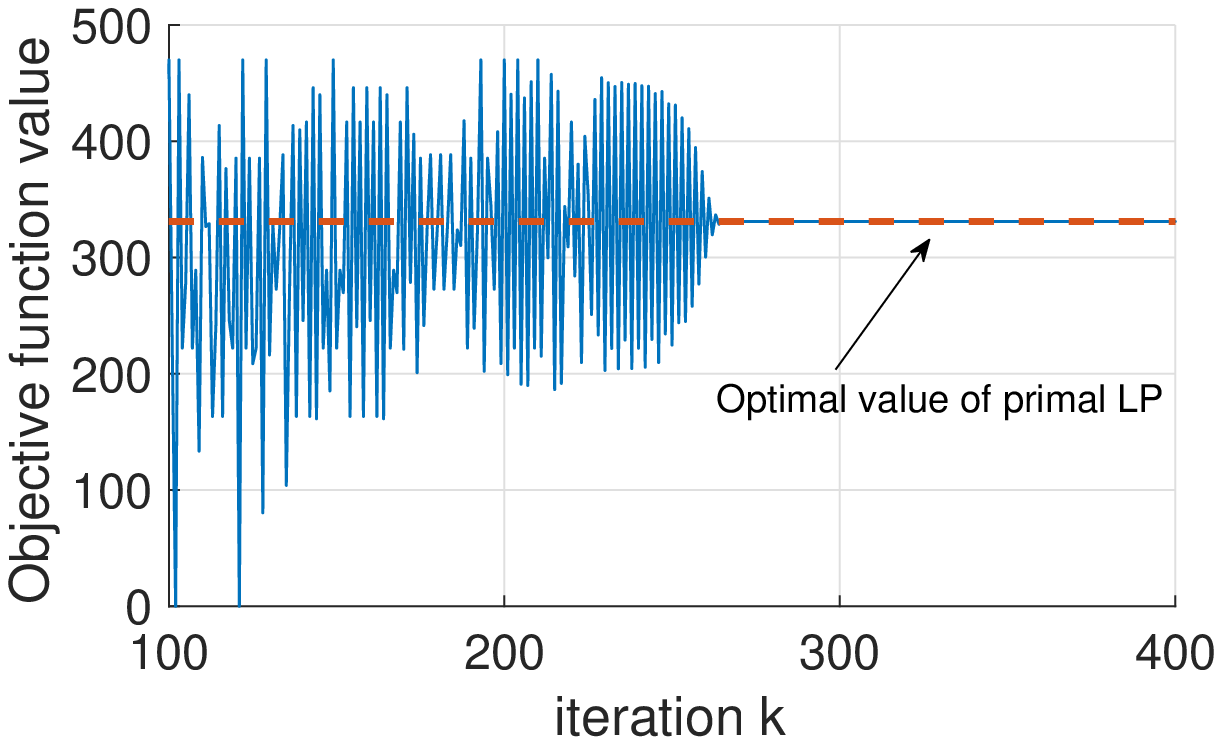}}
	
	\caption{Convergence and optimality of the distributed algorithm.}
	%\caption{The distributed algorithm for quadratic programming problem converges to its optimal solution, which also provides a near optimal value for the original LP problem. The decoupled constraints (\ref{decoupled constraints}) are reorganized in the form $\sum_{u\in\mathcal{U}}\beta^t_{u,i} - (f_i/\rho - 1/d^{max}) \leq 0$, and the values of the left parts of the inequalities are drawn in subfigure (b). So, if they are not positive, then the constraints are satisfied.}
	\label{fig:convergence}
\vspace{-20pt}
\end{figure}

Note that the distributed algorithm may not always obtain the optimal solution to the original LP problem. However, even when it converges to a suboptimal solution, the gap between the converged suboptimal value and the optimal value is very small. Among all 1,000 time slots in our simulation, we only observed 3 time slots that have an error percentage larger than 0.5\% and the maximum error percentage is 3\%, which can be reasonably neglected.

Fig. \ref{fig:cen_dis_comp} compares the performances of distributed and the centralized algorithms. We see that the performance achieved by the distributed algorithm is almost identical to that achieved by the centralized algorithm. This further validates that the proposed GLOBE algorithm can be effectively implemented in a distributed manner at each BS without sacrificing the performance.

\begin{figure}[h]
	\centering	
	\includegraphics[width=0.45 \linewidth]{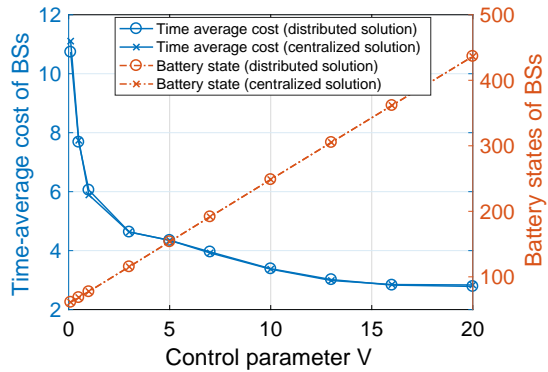}
\vspace{-10pt}
	\caption{Performance comparison between distributed and centralized algorithms.}
	\label{fig:cen_dis_comp}
\vspace{-20pt}
\end{figure}

\section{Conclusion}
In this paper, we have proposed an online algorithm to perform geographical load balancing in EH-powered MEC networks. We demonstrated that a fundamentally new design that simultaneously manages the limited energy, computing and radio access resources in both spatial and temporal domains is key to fully reaping the benefits of EH-power MEC. The proposed GLOBE algorithm operates online without the need to acquiring future system information. Moreover, GLOBE can be implemented in a distributed manner, where BSs solve local optimization problems with very limited information exchange. Our algorithm is simple and easy to implement in practical deployment scenarios, yet provides provable performance guarantee. Comprehensive numerical simulations have been carried out to validate the theoretical analysis and illustrate the performance improvement over other benchmark solutions.

\bibliographystyle{IEEEtran}
\bibliography{refs}

% that's all folks
\end{document}